\documentclass[12pt,reqno]{amsart}

\usepackage[left=1in, right=1in, top=1.1in,bottom=1.1in]{geometry}
\usepackage[title]{appendix}

\usepackage{graphicx}
\usepackage{subcaption}
\graphicspath{ {./fig/}}
\usepackage[nomarkers, nolists, figuresonly]{endfloat}
\usepackage{caption}

\usepackage[T1]{fontenc}
\usepackage{amsthm, amssymb, amsmath, amsfonts, mathrsfs, amscd}
\usepackage{enumitem}
\usepackage{stmaryrd}
\usepackage{mathrsfs}
\usepackage{bbm}
\usepackage{xcolor}
\usepackage{booktabs}

\usepackage{pdfsync}
\usepackage[font={scriptsize}]{caption}
\usepackage{verbatim}

\usepackage{hyperref}
\hypersetup{
  colorlinks   = true,
  citecolor    = green,
  linkcolor    = red
}

\newtheorem{theorem}{Theorem}[section]

\newtheorem{corollary}[theorem]{Corollary}

\newtheorem{proposition}[theorem]{Proposition}

\theoremstyle{remark}

\theoremstyle{remark}
\newtheorem{example}[theorem]{Example}

\newcommand{\p}{\partial}

\newcommand{\1}{\mathbbm{1}}

\title[Pricing and hedging for liquidity provision in CFMM]{Pricing and hedging for liquidity provision in Constant Function Market Making}
\author[J. Risk]{Jimmy Risk}
\author[S.-N. Tung]{Shen-Ning Tung}
\author[T.-H. Wang]{Tai-Ho Wang}
\date{\today}

\address{Jimmy Risk \newline
Department of Mathematics and Statistics, \newline
Cal Poly Pomona\newline
3801 W Temple Ave, Pomona CA 91768
}
\email{jrisk@cpp.edu}

\address{Shen-Ning Tung \newline
Department of Mathematics, \newline
National Tsing Hua University \newline
Hsinchu, Taiwan
}
\email{tung@math.nthu.edu.tw}

\address{Tai-Ho Wang \newline
Department of Mathematics \newline
Baruch College, The City University of New York \newline
1 Bernard Baruch Way, New York, NY10010
}
\email{tai-ho.wang@baruch.cuny.edu}

\keywords{Automatic market making, Decentralized exchange, Decentralized finance}

\begin{document}

\begin{abstract}
This paper develops a robust mathematical framework for Constant Function Market Makers (CFMMs) by transitioning from traditional token reserve analyses to a coordinate system defined by price and intrinsic liquidity. We establish a canonical parametrization of the bonding curve that ensures dimensional consistency across diverse trading functions, such as those employed by Uniswap and Balancer, and demonstrate that asset reserves and value functions exhibit a linear dependence on this intrinsic liquidity. This linear structure facilitates a streamlined approach to arbitrage-free pricing, delta hedging, and systematic risk management. By leveraging the Carr-Madan spanning formula, we characterize Impermanent Loss (IL) as a weighted strip of vanilla options, thereby defining a fine-grained implied volatility structure for liquidity profiles. Furthermore, we provide a path-dependent analysis of IL using the last-passage time. Empirical results from Uniswap v3 ETH/USDC pools and Deribit option markets confirm a volatility smile consistent with crypto-asset dynamics, validating the framework's utility in characterizing the risk-neutral fair value of liquidity provision.
\end{abstract}

\maketitle

\begin{center}
{\it It all begins with the canonical parametrization of the bonding curve in CFMM.}
\end{center}

\allowdisplaybreaks  

\section{Introduction}
The emergence of \textit{Decentralized Finance} (DeFi) \cite{harvey2021defi, gobet2023decentralized, capponi2021adoption} has revolutionized asset exchange through the implementation of \textit{Constant Function Market Makers} (CFMMs) \cite{angeris2020improved, angeris2024geometry}. Unlike traditional limit order books that rely on discrete orders, protocols such as \textit{Uniswap} \cite{adams2020uniswapv2, adams2021uniswapv3, adams2023uniswapv4} and \textit{Balancer} \cite{martinelli2019balancer} utilize a deterministic bonding curve to govern the relationship between token reserves and exchange rates. This mechanism ensures that liquidity remains available across a continuous price spectrum, contingent on the slippage costs defined by the underlying trading function.

Within this ecosystem, the role of the \textit{Liquidity Provider} (LP) has undergone a profound transformation. By depositing assets into a smart contract, LPs effectively sell tokens to the market according to a pre-defined rule \cite{angeris2023replicating}. Consequently, an LP position can be mathematically framed as a sophisticated financial derivative. The growing academic and industrial recognition of this equivalence is highlighted by protocols such as Panoptic \cite{lambert2022panoptic}, which utilize Uniswap v3 liquidity positions as the underlying engine for perpetual options.

As the complexity of these automated strategies increases, research has shifted toward the \textit{optimal design of bonding curves}. While early automated market makers (AMMs) relied on simple invariants, the current landscape demands capital-efficient structures that reflect specific financial payoffs. Determining an ``optimal'' curve requires a rigorous understanding of how liquidity profiles interact with market volatility and informed order flow.

\subsection*{Literature Review}

\subsubsection*{LP Positions as Synthetic Derivatives}
A foundational branch of literature characterizes the payoff structures of LPs as equivalent to traditional derivatives. Early work replicated Uniswap v2 payoffs with portfolios of options, a framework subsequently extended to the concentrated liquidity models of Uniswap v3 \cite{clark2020replicating, clark2021replicating}. This equivalence was formalized by Angeris et al. \cite{angeris2023replicating}, who characterized the space of feasible CFMM payoffs in terms of the properties of the trading function. Within this lens, \textit{Impermanent Loss} (IL) is increasingly viewed as a hedgeable risk; notably, Fukasawa et al. \cite{fukasawa2023weighted} demonstrate that IL in constant product markets can be perfectly hedged using a weighted variance swap of order $1/2$. More practical hedging strategies involve standard put and call options to mitigate the idiosyncratic risks of concentrated liquidity positions \cite{lipton2025unified}.

\subsubsection*{Pricing, Hedging, and Implied Volatility}
As LP positions are recognized as derivatives, the question of their ``fair value'' arises. Recent works establish a rigorous framework for the arbitrage-free pricing of liquidity tokens \cite{bichuch2025arbitrage, bichuch2025price}. By defining an implied volatility metric derived from transaction-fee streams, this research enables LPs to quantify the ``price of liquidity'' in a manner consistent with the Black--Scholes paradigm, bridging the gap between endogenous fee generation and exogenous asset volatility.

\subsubsection*{Dynamic Modeling and Parameterization}
To move beyond static payoffs, recent studies model the temporal evolution of liquidity. The framework introduced in \cite{tung2024mathematical} provides a canonical parameterization of CFMMs through a ``liquidity profile,'' offering a unified mathematical language for various bonding curves. Building on this, \cite{risk2025dynamics} models the stochastic dynamics of these profiles to capture shifts in response to market behavior. These models are essential for analyzing the ``Loss-Versus-Rebalancing'' (LVR) metric, which \cite{singh2025modeling} presents as a continuous-installment option to price the ongoing costs of providing liquidity against informed arbitrageurs.

\subsubsection*{Strategic and Optimal Liquidity Provision}
Finally, a growing body of work focuses on LPs' strategic behavior. Using an optimal stopping approach, Capponi and Zhu \cite{capponi2025optimal} demonstrate the existence of strategies that remain profitable relative to ``buy-and-hold'' benchmarks, even after accounting for LVR \cite{capponi2025optimal}. This suggests that the ``natural coordinates'' for AMM analysis are price and intrinsic liquidity, which together dictate the optimal design of the bonding function.

\subsection*{Contribution}
This paper reviews and extends the market model framework proposed in \cite{tung2024mathematical}. By shifting the perspective from token reserves to a coordinate system defined by price and liquidity, the framework achieves several key objectives:
\begin{enumerate}
    \item It defines an intrinsic liquidity that maintains dimensional consistency ($\sqrt{{\rm ETH} \times {\rm USDC}}$) across all CFMM bonding functions, providing a universal metric for market depth.
    \item It demonstrates that relevant financial quantities, including asset reserves and value functions, are encoded in the linear dependence of the model on the intrinsic liquidity profile.
    \item It simplifies complex tasks such as \textit{arbitrage-free pricing}, \textit{delta hedging}, and \textit{systematic risk management} by exploiting this linear structure.
    \item It unifies diverse results from existing literature---ranging from constant product formulas to concentrated liquidity---into a single, concise representation.
    \item It enables the evaluation of liquidity profiles through fine-structure implied volatility and path-dependent last-passage time analysis, allowing for a granular assessment of market-implied risk.
\end{enumerate}

We argue that the \textit{price-intrinsic liquidity pair} is the natural coordinate system for the pricing and hedging of bonding functions. In the following sections, we detail the mathematical derivation of this framework and demonstrate its application to several canonical AMM structures.
\section{Intrinsic Liquidity and Liquidity Profile} \label{sec:Uniswap}

In the study of Constant Function Market Makers (CFMMs) \cite{angeris2020improved, angeris2024geometry}, the system is typically characterized by a bonding function $f$ and its associated level set $f(x,y) = K$, often termed the \textit{bonding curve}. Conventionally, the level $K$ is interpreted as the ``liquidity depth.'' We argue, however, that this definition is \textit{dimensionally inconsistent} across different CFMM architectures. To resolve this, we propose an \textit{intrinsic} definition of liquidity that is invariant under reparametrization and maintains dimensional consistency across all CFMM models. While aspects of this discussion appear in \cite{tung2024mathematical}, we provide the key arguments here for completeness.

\subsection{Dimensionally Consistent Liquidity}
Consider the Constant Product Market Maker (CPMM) \cite{adams2020uniswapv2} for a pair of tokens, e.g., ETH and USDC. The bonding curve is defined by the level set $\sqrt{xy} = K$, where the parameter $K$ possesses the physical dimension $\sqrt{\mathrm{ETH} \times \mathrm{USDC}}$. In contrast, for Geometric Mean Market Makers (G3Ms) \cite{martinelli2019balancer}, the bonding curve is given by $x^\alpha y^{1-\alpha} = K$ for some weight $\alpha \in (0,1)$. For this model, $K$ carries the dimension $\mathrm{ETH}^\alpha \times \mathrm{USDC}^{1-\alpha}$.

This dimensional disparity implies that the parameter $K$ is not a consistent proxy for market depth across protocols; specifically, the same numerical value of $K$ represents different physical quantities depending on the exponent $\alpha$. To resolve this, we utilize the framework established in \cite[Appendix B]{tung2024mathematical} to define a \textit{local intrinsic liquidity} $\ell$ that is both dimensionally consistent and invariant under reparametrization. For a smooth bonding curve $f(x,y)=K$, we define $\ell$ at the reserve state $(x,y)$ as:
\begin{equation}
\ell := \frac{-2(f_x f_y)^{3/2}}{f_{yy}f_x^2 - 2f_{xy}f_xf_y + f_{xx}f_y^2}, \label{eqn:ell}
\end{equation}
where subindices denote partial derivatives.

The definition in \eqref{eqn:ell} ensures that $\ell$ always yields the dimension $\sqrt{\mathrm{ETH} \times \mathrm{USDC}}$, matching the CPMM standard regardless of the functional form of $f$. Mathematically, this definition is related to the \textit{curvature} of the bonding curve in the $(x,y)$ plane. We emphasize two critical properties of $\ell$:
\begin{itemize}
    \item \textit{Locality:} Unlike the global constant $K$, $\ell$ is a state-dependent quantity $\ell(x,y)$ that characterizes the immediate price impact at a specific reserve point.
    \item \textit{Invariance:} The definition is intrinsic to the geometry of the curve $f(x,y)=K$, meaning it is independent of the choice of functional representation (e.g., $xy=K^2$ versus $\sqrt{xy}=K$).
\end{itemize}

\begin{example}[G3M Liquidity]
For the G3M $f(x,y) = x^\alpha y^{1 - \alpha} = K$, the intrinsic liquidity is computed as:
\begin{equation*}
\ell = 2 \sqrt{\alpha (1 - \alpha)}\sqrt{xy} = 2 \sqrt{\alpha (1 - \alpha)} K \, x^{\frac{1}{2}-\alpha} y^{\alpha - \frac{1}{2}}. \label{cite: 75}
\end{equation*}
This expression explicitly restores the $\sqrt{\mathrm{ETH} \times \mathrm{USDC}}$ dimension. In the specific case of the CPMM ($\alpha = \frac{1}{2}$), the local intrinsic liquidity simplifies to $\ell = \sqrt{xy} = K$. Thus, for a CPMM, the liquidity profile is uniform across all price points, and the intrinsic liquidity coincides with the traditional depth parameter.
\end{example}

\subsection{Liquidity Profile and Reserves}
In the study of CFMMs, the spot price $p$ is endogenously determined by the pool's current reserve state $(x,y)$. By adopting token $Y$ as the num\'{e}raire, the spot price is defined as the marginal rate of substitution:
\begin{equation*}
p := -\frac{dy}{dx} = \frac{f_x}{f_y}.
\end{equation*}
This value represents the unit price of token $X$ required for an infinitesimal swap $dx$. While conventional models define reserves $x$ and $y$ as primary variables, we demonstrate that the bonding curve $f(x,y)=K$ admits a more fundamental representation via the spot price $p$ and the local intrinsic liquidity $\ell$. This leads to what we term the \textit{canonical parametrization} of the bonding curve.

\begin{theorem}[Canonical Parametrization] \label{thm:canonical}
Let the bonding function $f(x,y)$ be smooth, strictly increasing, and convex in $(x,y)$. For the CFMM $f(x,y)=K$, the reserve levels $(x,y)$ are uniquely determined by the spot price $p$ and the local intrinsic liquidity $\ell$ via the following integral representations:
\begin{equation}
x(p) = \int_p^\infty \frac{\ell(q)}{2\sqrt{q^3}} \, dq, \qquad
y(p) = \int_0^p \frac{\ell(q)}{2\sqrt q} \, dq, \label{eqn:canon-param}
\end{equation}
where $\ell(q)$ is the intrinsic liquidity at price $q$ defined in \eqref{eqn:ell}.
\end{theorem}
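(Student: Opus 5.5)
The plan is to parametrize the level curve $f(x,y)=K$ by the spot price $p$, derive a pair of ODEs $dx/dp$ and $dy/dp$ whose coefficients involve only $\ell$ and $p$, and then integrate them using boundary behavior at $p\to 0$ and $p\to\infty$.

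\textbf{Step 1: the curve is a graph over $p$.} Since $f$ is strictly increasing, $f_x,f_y>0$, so the implicit function theorem writes the level set locally as a graph $y=y(x)$ with $y'=-f_x/f_y=-p<0$. Differentiating once more gives
\[
y''=-\frac{f_{xx}f_y^2-2f_{xy}f_xf_y+f_{yy}f_x^2}{f_y^3}.
\]
Comparing with \eqref{eqn:ell},
\[
y''=\frac{2(f_xf_y)^{3/2}}{\ell\, f_y^3}=\frac{2p^{3/2}}{\ell}.
\]
Convexity of $f$, or more precisely quasi-convexity of the curve, makes the bracket nonpositive. Assuming nondegeneracy, it is strictly negative, so $\ell>0$ and $y''>0$. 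Therefore $p=-y'(x)$ is strictly decreasing in $x$, and the curve can be reparametrized by $p$. This also gives uniqueness of $(x,y)$ for a given $p$, and $\ell$ becomes a function $\ell(p)$.

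\textbf{Step 2: the ODEs.} From $dp/dx=-y''=-2p^{3/2}/\ell$ we get
\[
\frac{dx}{dp}=-\frac{\ell(p)}{2p^{3/2}}.
\]
Since $dy=-p\,dx$,
\[
\frac{dy}{dp}=\frac{\ell(p)}{2\sqrt p}.
\]
These are exactly the integrands in \eqref{eqn:canon-param}. As a sanity check, for the CPMM ($\ell=K$) they give $x=K/\sqrt p$ and $y=K\sqrt p$.

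\textbf{Step 3: integrate and fix the constants.} Integrating from $p$ to $\infty$ and from $0$ to $p$ yields
\[
x(p)=x(\infty)+\int_p^\infty\frac{\ell}{2q^{3/2}}\,dq,\qquad y(p)=y(0)+\int_0^p\frac{\ell}{2\sqrt q}\,dq.
\]
It remains to show that $x\to0$ as $p\to\infty$ and $y\to0$ as $p\to0$.

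\textbf{Main obstacle: the boundary conditions.} This is where I expect the real difficulty. The claim is that as the price of $X$ goes to infinity, the pool holds no $X$, and symmetrically for $Y$. I would handle it as follows. Under the standing assumptions, the level curve spans the full price range $(0,\infty)$ and terminates on the coordinate axes, meaning the bonding curve is viewed as the full trading range with reserves exhausted at the endpoints, as in concentrated-liquidity ranges. Along the curve, $x$ decreases to its infimum $x_*\ge0$ as $p\uparrow\infty$. If $x_*>0$, the slope $-y'$ would blow up at an interior $x$, which contradicts smoothness of $f$ with $f_y>0$ at that point, unless the curve leaves the domain through the axis $y=0$. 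The symmetric argument handles $y$.

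If this cannot be derived cleanly from smoothness, monotonicity and convexity alone, I would state it as the natural normalization: reserves vanish at the extreme prices. This is the convention implicit in the paper and in \cite{tung2024mathematical}. Convergence of the integrals then follows from the finiteness of $x$ and $y$ at any interior price.
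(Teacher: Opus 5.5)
Your Steps 1--2 are the paper's argument specialized to one parametrization. The paper takes an arbitrary parametrization $(x(s),y(s))$ of the level curve with $\dot x<0<\dot y$, differentiates $p=-\dot y/\dot x$ in $s$, and inverts. Choosing the graph $y=y(x)$ reduces this to the single identity $y''=2p^{3/2}/\ell$. That identity gives both ODEs at once, and also the strict monotonicity of $p$ behind the uniqueness claim.

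It also keeps the signs under control. The paper's parametric expression $\ell=2(-\dot x\dot y)^{3/2}/(\ddot y\dot x-\ddot x\dot y)$ actually returns $-K$ for the CPMM; this is offset by matching sign slips in its $dx/dp$ and $dy/dp$ lines.

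One correction to Step 1. The denominator in the definition of $\ell$ equals $v^{\top}\nabla^2 f\,v$ with $v=(f_y,-f_x)$ tangent to the level set. So convexity of $f$ would make it nonnegative and give $\ell\le 0$. The sign you actually use, $y''>0$, is convexity of the level curve, i.e.\ quasi-concavity of $f$, as for the concave functions $\sqrt{xy}$ and $x^\alpha y^{1-\alpha}$. That is how the theorem's ``convex'' hypothesis has to be read.

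The substantive issue is Step 3. The paper passes over it by integrating ``from the respective boundaries,'' tacitly taking $x(p)\to 0$ as $p\to\infty$ and $y(p)\to 0$ as $p\to 0$. Your derivation of this has a hole: it presumes the limit point $(x_*,y_*)$ of the curve lies in the domain. As $p\uparrow\infty$, however, $y$ increases and may diverge, producing a vertical asymptote at $x=x_*>0$. The escape is $y\to\infty$; the axis $y=0$ lies in the opposite direction along the curve.

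Here is a concrete counterexample.
\begin{itemize}
\item The function $f(x,y)=\sqrt{x}-1/y$ is smooth, strictly increasing and concave on $(0,\infty)^2$, with $\ell>0$ along its level sets.
\item For $K>0$ the level set is $y=1/(\sqrt{x}-K)$ with $x>K^2$.
\item The price $p=f_x/f_y=y^2/(2\sqrt{x})$ sweeps all of $(0,\infty)$ and tends to $\infty$ as $x\downarrow K^2$.
\item Hence $x(\infty)=K^2\neq 0$, and the formula for $x(p)$ is off by the constant $K^2$ even though the full price range is attained.
\item Symmetrically, $-1/x+\sqrt{y}$ breaks $y(0)=0$.
\end{itemize}

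So the boundary conditions do not follow from smoothness, monotonicity and convexity, and your fallback is not optional. You must state as a hypothesis that reserves are exhausted at the ends of the attainable price range, with $\ell:=0$ outside that range when it is bounded (as for a v3 range position). With that normalization explicit, your Steps 1--3 give a complete proof. This includes your remark that the integrals are finite, which holds by monotone convergence since $\ell\ge 0$. The result is more careful than the paper's proof on exactly the point it leaves implicit.
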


\begin{proof}
Consider an arbitrary parametrization $(x(s), y(s))$ of the level curve $f(x,y)=K$ with $\dot{x} < 0$ and $\dot{y} > 0$. Implicit differentiation yields $f_x \dot{x} + f_y \dot{y} = 0$, implying $p = -\dot{y}/\dot{x}$. By differentiating $p$ with respect to the parameter $s$, we have:
\begin{eqnarray*}
\frac{dp}{ds} &=& -\frac{d}{ds}\left(\frac{\dot y}{\dot x}\right) 
= \frac{-\ddot y\dot x + \ddot x \dot y}{\dot x^2} \\
&=& \frac{d}{ds}\left(\frac{f_x}{f_y}\right) = -\frac{\dot y}{f_x f_y^2}(f_{xx}f_y^2 - 2f_{xy}f_xf_y + f_{yy}f_x^2).
\end{eqnarray*}
Utilizing the definition of $\ell$ in \eqref{eqn:ell} and changing the variable of integration from $s$ to $p$, we obtain:
\begin{eqnarray*}
\ell &=& \frac{-2(f_x f_y)^{3/2}}{f_{yy}f_x^2 - 2f_{xy}f_xf_y + f_{xx}f_y^2}
= 2 \, \frac{(-\dot x\dot y)^{3/2}}{\ddot y\dot x - \ddot x\dot y}, \\
\frac{dx}{dp} &=& \dot x \frac{ds}{dp} = -\frac{\dot x^3}{\ddot y\dot x - \ddot x\dot y} = - \frac12 \frac{2(-\dot x\dot y)^{3/2}}{\ddot y\dot x - \ddot x\dot y} \left(-\frac{\dot x}{\dot y}\right)^{3/2} = -\frac{\ell}{2 p^{3/2}}, \\
\frac{dy}{dp} &=& \dot y \frac{ds}{dp} = -\frac{\dot x^2 \dot y}{\ddot y\dot x - \ddot x\dot y} = \frac12 \frac{2(-\dot x\dot y)^{3/2}}{\ddot y\dot x - \ddot x\dot y} \sqrt{-\frac{\dot x}{\dot y}} = \frac{\ell}{2 \sqrt p}.
\end{eqnarray*}
Integrating these expressions with respect to $p$ from the respective boundaries ($p \to \infty$ for $x$ and $p \to 0$ for $y$) completes the proof.
\end{proof}

The relationship in Theorem \ref{thm:canonical} allows us to define the \textit{local intrinsic liquidity profile} as a distribution function $\ell(p)$ over the price spectrum $(0, \infty)$. For the purpose of financial modeling, it is often more convenient to work with the \textit{liquidity profile} $L(q)$, defined as:
\begin{equation}
L(q) := \frac{\ell(q)}{2q^{3/2}}.
\end{equation}
Substituting $L(q)$ into \eqref{eqn:canon-param} yields a succinct representation of the pool reserves:
\begin{eqnarray} \label{eq:reserves}
&& x(p) = \int_p^\infty \frac{\ell(q)}{2\sqrt{q^3}} dq = \int_p^\infty L(q) dq , \quad y(p) = \int_0^p \frac{\ell(q)}{2\sqrt q} dq = \int_0^p q L(q) dq. %\label{eq:reserves_x}
\end{eqnarray}
The dimensional consistency is preserved: $L(q)dq$ carries the dimension of token $X$, while $qL(q)dq$ carries the dimension of token $Y$. This formulation highlights the \textit{infinitesimal price impact} of a trade volume $dx$:
\begin{equation}
dp = -\frac{1}{L(p)}dx.
\end{equation}
This identity confirms the economic intuition that higher liquidity density $L(p)$ suppresses slippage, as a larger $L(p)$ results in a smaller price adjustment $dp$ for any given volume $dx$.

\subsection{Liquidity Profile Value and Option Replication} \label{sec:pool-value}
For a given liquidity profile $L$, we define its mark-to-market value $V_L(p)$ as a function of the spot price $p$. Taking token $Y$ as the num\'{e}raire, the value is defined by the sum of the value of its constituent token holdings:
\begin{align}
V_L(p) &= x(p) p + y(p) \notag \\
&= \int_0^\infty p \1_{[p,\infty)}(q) L(q) dq + \int_0^\infty q \1_{[0, p]}(q) L(q) dq \notag \\
&= \int_0^\infty \min\{p, q\} L(q) dq, \label{eq:pool_value_min}
\end{align}
where $\1_A(\cdot)$ denotes the indicator function for the interval $A$. The representation in \eqref{eq:pool_value_min} provides a direct financial interpretation: since $\min\{p, q\} = p - (p-q)^+$ corresponds to the payoff of a covered call struck at $q$, the total pool value $V_L$ is equivalent to the payoff of a \textit{strip of weighted covered calls}. The weights of this portfolio are determined entirely by the liquidity profile $L(q)$.

A more fundamental decomposition is achieved by applying the spanning formula of Carr and Madan \cite{carr2001towards}, which enables the representation of any twice-differentiable payoff $\varphi(S)$ via a portfolio of OTM options:
\begin{equation}
\varphi(S) = \varphi(S_0) + \varphi'(S_0)(S - S_0) + \int_0^{S_0} \varphi''(K)(K - S)^+ \, dK + \int_{S_0}^\infty \varphi''(K) (S - K)^+ \, dK. \label{eqn:spanning}
\end{equation}
By noting that $V_L'(p) = x(p)$ and $V_L''(p) = x'(p) = -L(p)$, we apply this decomposition to the value function $V_L$. Substituting these derivatives into \eqref{eqn:spanning} yields:
\begin{eqnarray} \label{eqn:LP_value}
V_L(p) &=& V_L(p_0) + V_L'(p_0)(p - p_0) + \int_0^{p_0} V_L''(q)(q - p)^+ dq + \int_{p_0}^\infty V_L''(q) (p - q)^+ dq \notag \\
&=& x(p_0)p + y(p_0) - \int_0^{p_0} L(q) (q - P)^+ dq - \int_{p_0}^\infty L(q) (p - q)^+ dq. \label{eq:carr_madan}
\end{eqnarray}
The identity in \eqref{eq:carr_madan} demonstrates that the liquidity provider's position is equivalent to a linear combination of the initial holding value and a \textit{short position} in a continuously weighted strip of out-of-the-money (OTM) put options (for $q \in [0, p_0]$) and OTM call options (for $q \in [p_0, \infty)$). This equivalence is central to the pricing and hedging strategies developed in subsequent sections. In a complete market, \eqref{eq:carr_madan} implies that the value $V_L$ can be perfectly replicated by a self-financing trading strategy involving the underlying assets and vanilla options.

\subsection{Replicating Market Makers} \label{sec:RMM}
In \cite{angeris2023replicating}, it is established that every CFMM possesses a concave, nonnegative, and 1-homogeneous payoff function, and conversely, any payoff function satisfying these properties has a corresponding convex CFMM. We demonstrate in this section that the canonical relationships established in \eqref{eq:reserves} provide a straightforward, alternative approach for constructing a CFMM trading function for any given target portfolio value $V(p)$.

Assume a liquidity provider (LP) wishes to replicate a target portfolio value $V(p) := p x(p) + y(p)$ by designing an appropriate CFMM. Recall from the canonical parametrization that $x(p) = V'(p)$. By inverting this relationship, we obtain the spot price as a function of reserves: $p = (V')^{-1}(x)$. Furthermore, since the liquidity profile is given by $L(p) = -V''(p)$, we can derive the CFMM trading function by integrating the reserve relationship for $y$:
\begin{equation}
y + \int_0^{(V')^{-1}(x)} q V''(q) \, dq = 0. \label{eqn:RMM_general}
\end{equation}
We illustrate this procedure with two canonical examples from the literature.

\begin{example}[Payoff of a Covered Call at Expiry]
Consider an LP whose target portfolio value is the payoff of a covered call struck at $K$ at expiry:
\begin{equation*}
V(p) = p - (p - K)^+.
\end{equation*}
The corresponding liquidity profile is a Dirac delta function: $L(p) = -V''(p) = \delta(p - K)$. Applying the canonical parametrization yields:
\begin{align*}
x(p) &= \int_p^\infty \delta(q - K) \, dq = \theta(K - p), \\
y(p) &= \int_0^p q \, \delta(q - K) \, dq = K \theta(p - K),
\end{align*}
where $\theta$ denotes the Heaviside step function. Using the identity $1 - \theta(K - p) = \theta(p - K)$, we recover the linear trading function:
\begin{equation*}
Kx + y = K,
\end{equation*}
consistent with the results in \cite{angeris2023replicating}.
\end{example}

\begin{example}[Black--Scholes Covered Call]
Assume the target portfolio value is the price of a covered call under Black--Scholes dynamics:
\begin{equation*}
V(p) = p - p N(d_1) + K N(d_2),
\end{equation*}
where $d_1 = \frac{\ln(p/K)}{v} + \frac{v}{2}$ and $d_2 = d_1 - v$. Since $x = V'(p) = 1 - N(d_1)$, we find the price as a function of the $x$ reserve: $p = K e^{vN^{-1}(1-x) - v^2/2}$. Given that $L(p) = -V''(p) = \Gamma(p)$ (the Black--Scholes Gamma), the $y$ reserve is obtained via:
\begin{eqnarray*}
y &=& \int_0^p q L(q) dq
= \int_0^{K e^{vN^{-1}(1 - x) - \frac{v^2}2}} q \, \Gamma(q) dq 
= K N\left(N^{-1}(1 - x) - v\right),
\end{eqnarray*}
This can be expressed in the symmetric form:
\begin{equation*}
N^{-1}(1 - x) - N^{-1}(y/K) = v, \label{cite: 185}
\end{equation*}
recovering the replicating market maker results for perpetual options in \cite{angeris2023replicating}.
\end{example}
\section{Pricing and Hedging for General Liquidity Profiles} \label{sec:Pricing}

In this section, we establish that liquidity provision within a CFMM pool is mathematically equivalent to maintaining a short position in a contingent claim characterized by a convex payoff. Leveraging the canonical parametrization developed in \eqref{eq:reserves}, we quantify the risk inherent to the liquidity provider (LP) and derive the associated risk-neutral pricing and hedging frameworks by exploiting the spanning formula introduced in \eqref{eqn:LP_value}.

\subsection{Static Replication of Impermanent Loss by Vanilla Options} \label{sec:IL-options}
The primary risk for an LP in a CFMM is quantified as \textit{impermanent loss} (IL), also referred to as \textit{divergence loss}. Defined as the opportunity cost of providing liquidity relative to a simple buy-and-hold strategy of the underlying assets, IL can be expressed as a weighted sum of payoffs from a continuous strip of vanilla options.

Let $p_0$ denote the current pool price and $p_T$ the future price at time $T$. Utilizing the expression for the value function $V_L(p_T)$ from \eqref{eq:carr_madan}, the IL at time $T$ is defined as:
\begin{align}
{\rm IL}(p_T|p_0, L) &:= x(p_0)p_T + y(p_0) - V_L(p_T) \notag \\
&= \int_0^{p_0} L(q) (q - p_T)^+ dq + \int_{p_0}^\infty L(q) (p_T - q)^+ dq. \label{eqn:IL_replication}
\end{align}
This representation highlights the \textit{linearity of IL with respect to the liquidity profile $L$} and its universal applicability across diverse bonding functions. The formula can be simplified to a more concise integral form:
\begin{equation}
{\rm IL}(p_T|p_0, L) = \int_{p_0}^{p_T} (p_T - q) L(q) dq, \label{eqn:IL-short}
\end{equation}
which underscores that IL is essentially the cost of the option embedded in the LP's position rather than a mere metric of pool stability.

This replication framework unifies several disparate expressions found in the literature \cite{clark2020replicating, clark2021replicating, maire2024market, lipton2025unified}. For instance, \cite{lipton2025unified} showed that a concentrated liquidity position in the range $[p_a, p_b]$ with a single unit of liquidity ($L(q) = \frac{1}{2q^{3/2}}\1_{\{p_a \leq q \leq p_b\}}$) can be decomposed into terms $u_0$, $u_{1/2}$, and $u_1$ as follows:
\begin{align*}
u_0(p_t) &= \sqrt{p_0} + \frac{p_t}{\sqrt{p_0}}, \\
u_{1/2}(p_t) &= - 2\sqrt{p_t} \1_{\{p_a \leq p_t \leq p_b\}}, \\
u_1(p_t) &= -2\sqrt{p_a} \1_{\{p_t < p_a\}} + \frac{1}{\sqrt{p_a}}(p_a - p_t)^+ - 2\sqrt{p_b} \1_{\{p_t > p_b\}} - \frac{1}{\sqrt{p_b}}(p_t - p_b)^+.
\end{align*}
Applying the general result in \eqref{eqn:IL_replication} provides a direct derivation of this specific case. By integrating over the range $[p_a, p_b]$ and considering the price relative to the boundaries, we recover the tripartite formula:
\begin{eqnarray*}
{\rm IL}(p_t|p_0, L) &=& \int_{p_a}^{p_0} (q - p_t)^+ \frac{dq}{2q^{3/2}} + \int_{p_0}^{p_b} (p_t - q)^+ \frac{dq}{2q^{3/2}} \\
&=& \left\{\int_{p_a}^{p_0} (q - p_t)^+ \frac{dq}{2q^{3/2}} + \int_{p_0}^{p_b} (p_t - q)^+ \frac{dq}{2q^{3/2}}\right\} \1_{\{p_t < p_a\}} \\
&& + \left\{\int_{p_a}^{p_0} (q - p_t)^+ \frac{dq}{2q^{3/2}} + \int_{p_0}^{p_b} (p_t - q)^+ \frac{dq}{2q^{3/2}}\right\} \1_{\{p_t > p_b\}} \\
&& + \left\{\int_{p_a}^{p_0} (q - p_t)^+ \frac{dq}{2q^{3/2}} + \int_{p_0}^{p_b} (p_t - q)^+ \frac{dq}{2q^{3/2}}\right\} \1_{\{p_a \leq p_t \leq p_b\}} \\
&=& \left\{\sqrt{p_0} - \sqrt{p_a} - \frac{p_t}{\sqrt{p_a}} + \frac{p_t}{\sqrt{p_0}} \right\} \1_{\{p_t < p_a\}} \\
&& + \left\{\frac{p_t}{\sqrt{p_0}} - \frac{p_t}{\sqrt{p_b}} - \sqrt{p_b} + \sqrt{p_0} \right\} \1_{\{p_t > p_b\}} \\
&& + \left\{\sqrt{p_0} - 2\sqrt{p_t} + \frac{p_t}{\sqrt{p_0}} \right\} \1_{\{p_a \leq p_t \leq p_b\}} \\
&=& \left(\sqrt{p_0} + \frac{p_t}{\sqrt{p_0}}\right) - 2\sqrt{p_t} \1_{\{p_a \leq p_t \leq p_b\}} - \left\{ \sqrt{p_a} + \frac{p_t}{\sqrt{p_a}} \right\} \1_{\{p_t < p_a\}} - \left\{ \frac{p_t}{\sqrt{p_b}} + \sqrt{p_b} \right\} \1_{\{p_t > p_b\}} \\
&=& \left(\sqrt{p_0} + \frac{p_t}{\sqrt{p_0}}\right) - 2\sqrt{p_t} \1_{\{p_a \leq p_t \leq p_b\}} \\
&& - \left[ 2\sqrt{p_a} - \frac{1}{\sqrt{p_a}}(p_a - p_t) \right] \1_{\{p_t<p_a\}} - \left[ 2\sqrt{p_b} + \frac{1}{\sqrt{p_b}}(p_t - p_b) \right] \1_{\{p_t>p_b\}} \\
&=& \sqrt{p_0} + \frac{p_t}{\sqrt{p_0}} - 2\sqrt{p_t} \1_{\{p_a \leq p_t \leq p_b\}} \\
&& -2\sqrt{p_a} \1_{\{p_t<p_a\}} + \frac{1}{\sqrt{p_a}}(p_a - p_t)^+ - 2\sqrt{p_b} \1_{\{p_t>p_b\}} - \frac{1}{\sqrt{p_b}}(p_t - p_b)^+ \\
&=& u_0(p_t) + u_{1/2}u_0(p_t) + u_1 u_0(p_t).
\end{eqnarray*}

\subsection{Dynamic Decomposition: Loss-Versus-Rebalancing}
In this subsection, we transition to a continuous-time framework to analyze the temporal evolution of Impermanent Loss (IL). A critical, non-hedgeable component of IL is the \textit{loss-versus-rebalancing} (LVR), first introduced in \cite{milionis2022loss-versus-rebalancing}. We utilize the canonical parametrization in \eqref{eq:reserves} to provide alternative understandings of the relationship between IL, LVR, and contracts on realized variance such as variance and gamma swaps.

\subsubsection{IL as Hedging Cost plus LVR}
By applying the Tanaka formula, we derive an alternative expression for the LVR in terms of the local times of the underlying price process $P_t$. Given that $\mathrm{IL}(P_0 | P_0, L) = 0$, the evolution of the impermanent loss can be written as:
\begin{align}
\mathrm{IL}(P_t | P_0, L) &= \int_0^t d\left( \mathrm{IL}(P_s | P_0, L) \right) \notag \\
&= \int_0^t \int_0^{P_0} L(q) d(q - P_s)^+ dq + \int_0^t \int_{P_0}^\infty L(q) d(P_s - q)^+ dq. \label{eqn:IL_change_tanaka_start}
\end{align}
Applying the Tanaka formula:
\begin{equation*}
d(P_t - q)^+ = -\1_{\{P_t \leq q\}} dP_t + \frac{1}{2} d\mathcal{L}_t^q(P), \quad d(q - P_t)^+ = \1_{\{P_t \geq q\}} dP_t + \frac{1}{2} d\mathcal{L}_t^q(P), 
\end{equation*}
where $\mathcal{L}_t^q(P)$ denotes the local time of $P$ at $q$, we obtain:
\begin{eqnarray}
&& {\rm IL}(P_t|P_0, L) \nonumber \\
&=& \int_0^{P_0} L(q) \left\{ -\int_0^t \1_{\{P_s < q\}} dP_s + \frac{1}{2} \mathcal{L}_t^q(P) \right\} dq + \int_{P_0}^\infty L(q) \left\{ \int_0^t \1_{\{P_s > q\}} dP_s + \frac{1}{2} \mathcal{L}_t^q(P) \right\} dq \notag \\
&=& -\int_0^t \left\{ \int_0^{P_0} L(q) \1_{\{P_s < q\}} dq - \int_{P_0}^\infty L(q) \1_{\{P_s > q\}} dq \right\} dP_s + \frac{1}{2} \int_0^\infty L(q) \mathcal{L}_t^q(P) dq. \qquad \label{eqn:IL_local_time_full}
\end{eqnarray}
Recalling the reserve identity $x(p) = \int_p^\infty L(q) \, dq$ , the term multiplying $dP_s$ is identified as the change in token $X$ reserves relative to the initial state:
\begin{equation*}
\int_0^{P_0} L(q) \1_{\{P_s < q\}} dq - \int_{P_0}^\infty L(q) \1_{\{P_s > q\}} dq = x(P_s) - x(P_0).
\end{equation*}
Substituting this into \eqref{eqn:IL_local_time_full} yields the fundamental dynamic decomposition:
\begin{equation}
\mathrm{IL}(P_t | P_0, L) = \int_0^t \left\{x(P_0) - x(P_s) \right\} dP_s + \frac{1}{2} \int_0^\infty \mathcal{L}_t^q(P) L(q) dq. \label{eq:IL_decomposition}
\end{equation}
The first integral, $\int_0^t \{x(P_0) - x(P_s)\} dP_s$, represents the \textit{hedging cost} or arbitrage profit arising from continuous rebalancing. The remaining term defines the LVR:
\begin{equation}
\mathrm{LVR}_t := \frac{1}{2} \int_0^\infty \mathcal{L}_t^q(P) L(q) \, dq. \label{eqn:LVR_local_time1}
\end{equation}
This representation characterizes LVR as a weighted sum of the local times of the price process across the entire price spectrum, with weights provided by the liquidity profile $L$. By the occupation time formula, the LVR can be expressed as a time-integral of the quadratic variation $\langle P \rangle$:
\begin{equation} \label{eqn:LVR_local_time}
\mathrm{LVR}_t = \frac{1}{2} \int_0^t L(P_s) d\langle P \rangle_s,
\end{equation}
This formulation generalizes the LVR results in \cite{milionis2022loss-versus-rebalancing} to arbitrary liquidity profiles and establishes a direct functional link between $L$ and the non-hedgeable risk of the position.

\begin{example}
Consider the edge case $L(q) = l \, \delta(q - q_0)$, where an LP provides liquidity $l$ only at price $q_0$. In this scenario:
\begin{align*}
\mathrm{IL}(P_t | P_0, L) &= l(q_0 - P_t)^+ \1_{\{P_0 \geq q_0\}} + l(P_t - q_0)^+ \1_{\{P_0 \leq q_0\}}, \\
{\rm LVR}_t &= \frac l2 \int_0^\infty \mathcal{L}_t^q(P) \delta(q - q_0) dq = \frac l2 \mathcal{L}_t^{q_0}(P).
\end{align*}
Thus, the IL equates to the payoff of $l$ vanilla options struck at $q_0$, while the LVR is precisely half the local time spent by the price process at $q_0$.
\end{example}

To demonstrate the advantages of expressing LVR via the liquidity profile $L$ as defined in \eqref{eqn:LVR_local_time}, we recover several key results from Sections 6.1 and 6.2 of \cite{fukasawa2023weighted}:

\begin{proposition} \
\begin{enumerate}
\item For a CFMM with the bonding curve $x + \ln y = K$ , the Impermanent Loss at time $T$ is given by: $$\mathrm{IL}_T = -P_T + P_0 + P_T \ln \frac{P_T}{P_0},$$ where $P_T$ and $P_0$ are the spot prices at times $T$ and $0$, respectively. Modulo the P\&L from holding one unit of the risky asset, this IL is equivalent to the payoff of an entropy contract and, at time 0, corresponds to the value of a gamma swap (up to a factor of 2).
\item For a CFMM with the bonding curve $\ln x + y = K$, the Impermanent Loss at time $T$ is: $$\mathrm{IL}_T = \frac{P_T}{P_0} - 1 - \ln \frac{P_T}{P_0}.$$ Modulo the P\&L from holding one unit of the risky asset, this IL is equivalent to the payoff of a log contract and, at time 0, corresponds to the value of a variance swap (up to a factor of 2). 
\end{enumerate}
\end{proposition}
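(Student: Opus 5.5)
We read off the liquidity profile $L$ of each bonding curve directly from the reserve relations \eqref{eq:reserves}, namely $L(p) = -x'(p)$. We then substitute $L$ into the concise IL formula \eqref{eqn:IL-short}, ${\rm IL}(p_T|p_0,L)=\int_{p_0}^{p_T}(p_T-q)L(q)\,dq$. The financial interpretations follow from the LVR representation \eqref{eqn:LVR_local_time} together with the decomposition \eqref{eq:IL_decomposition}.

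\emph{Case (1), $f(x,y)=x+\ln y$.}
\begin{itemize}
\item We have $f_x=1$ and $f_y=1/y$, so $p=f_x/f_y=y$.
\item On the level set this gives $y(p)=p$ and $x(p)=K-\ln p$.
\item Hence $L(p)=-x'(p)=1/p$. As a check, $\int_0^p qL(q)\,dq=p=y(p)$.
\item Plugging into \eqref{eqn:IL-short}:
\[
\int_{P_0}^{P_T}\frac{P_T-q}{q}\,dq = P_T\ln\frac{P_T}{P_0}-(P_T-P_0),
\]
which is the claimed expression.
\item Interpretation: apart from the linear term $P_0-P_T$ (the P\&L of one unit of the risky asset), the payoff $P_T\ln(P_T/P_0)$ is the entropy contract.
\item For the time-$0$ value, we use \eqref{eqn:LVR_local_time} with $d\langle P\rangle_s=\sigma_s^2P_s^2\,ds$. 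This gives ${\rm LVR}_T=\tfrac12\int_0^T P_s\sigma_s^2\,ds$, which is half the floating leg of a gamma swap.
\item Under a risk-neutral measure with zero rates, $P$ is a martingale. The hedging-cost term $\int_0^T\{x(P_0)-x(P_s)\}\,dP_s$ in \eqref{eq:IL_decomposition} then has zero expectation, assuming enough integrability. Therefore $\mathbb E[{\rm IL}_T]$ equals half the gamma-swap value.
\end{itemize}

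\emph{Case (2), $f(x,y)=\ln x+y$.}
\begin{itemize}
\item We have $p=f_x/f_y=1/x$, so $x(p)=1/p$ and $y(p)=K+\ln p$.
\item Hence $L(p)=1/p^2$, and indeed $y'(p)=1/p=pL(p)$.
\item Then \eqref{eqn:IL-short} gives
\[
\int_{P_0}^{P_T}\frac{P_T-q}{q^2}\,dq = \frac{P_T}{P_0}-1-\ln\frac{P_T}{P_0}.
\]
\item Interpretation: apart from the linear term $P_T/P_0-1$, this is the short log contract $-\ln(P_T/P_0)$.
\item For the time-$0$ value, \eqref{eqn:LVR_local_time} gives ${\rm LVR}_T=\tfrac12\int_0^T\sigma_s^2\,ds$, which is half the realized variance. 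The same martingale argument identifies $\mathbb E[{\rm IL}_T]$ with half the variance-swap value.
\end{itemize}

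\emph{Main obstacle.} These curves do not satisfy the boundary normalizations behind Theorem \ref{thm:canonical}.
\begin{itemize}
\item In case (1), $x(p)=\int_p^\infty dq/q$ diverges, and $x$ becomes negative for large $p$.
\item In case (2), $y(p)=K+\ln p$ is not equal to $\int_0^p dq/q$.
\end{itemize}
So \eqref{eq:reserves} holds only up to additive constants. We avoid this by using only the differential relations $x'=-L$ and $y'=pL$ from the proof of Theorem \ref{thm:canonical}. The quantity ${\rm IL}=x(p_0)p_T+y(p_0)-V_L(p_T)$ is invariant under shifting $x$ and $y$ by constants, since each shift contributes equally to the hold value and to $V_L(p_T)$. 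Formula \eqref{eqn:IL-short} depends on $L$ alone, so it remains valid here. This justification is what I would state explicitly before carrying out the routine integrals. In the pricing statements I would also state the martingale and integrability assumptions needed to drop the stochastic integral.
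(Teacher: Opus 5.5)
Your proposal is correct and follows essentially the paper's route: identify $L(q)=1/q$ (resp.\ $1/q^2$), integrate \eqref{eqn:IL-short}, and obtain the gamma/variance-swap identification from the LVR representation \eqref{eqn:LVR_local_time}, as in the paper's remark following the proof. The differences are minor. The paper obtains $L$ from the curvature formula \eqref{eqn:ell} ($\ell=2\sqrt{p}$, resp.\ $2/\sqrt{p}$) rather than from $L=-x'(p)$, which is equivalent by Theorem \ref{thm:canonical}. Your explicit handling of the additive-constant issue in \eqref{eq:reserves} and of the martingale step tightens points the paper leaves implicit.
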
 

\begin{proof}
For the curve $x + \ln y = K$, we set $f(x,y) = x + \ln y$. It follows that $p = \frac{f_x}{f_y} = y$. Using \eqref{eqn:ell}, the local intrinsic liquidity is:
\begin{eqnarray*}
\ell = \frac{-2(f_x f_y)^{3/2}}{f_{xx}f_y^2 - 2 f_{xy}f_xf_y + f_{yy}f_x^2} = \frac{-2 y^{-3/2}}{-y^{-2}} = 2\sqrt{y} = 2\sqrt{p}.
\end{eqnarray*}
Thus, the liquidity profile is $L(q) = \frac{\ell(q)}{2\sqrt{q^3}} = \frac{1}{q}$. Applying \eqref{eqn:IL-short} yields:
\begin{eqnarray*}
\mathrm{IL}_T = \int{P_0}^{P_T} (P_T - q)\frac{dq}{q} = P_T \ln\frac{P_T}{P_0} - P_T + P_0.
\end{eqnarray*}
For the curve $\ln x + y = K$, we have $p = \frac{1}{x}$ and $\ell = \frac{2}{\sqrt{p}}$. This results in $L(q) = \frac{1}{q^2}$ , and \eqref{eqn:IL-short} gives:
\begin{eqnarray*}
\mathrm{IL}_T = \int{P_0}^{P_T} (P_T - q)\frac{dq}{q^2} = \frac{P_T}{P_0} - 1 - \ln\frac{P_T}{P_0}.
\end{eqnarray*}
\end{proof}

We remark that the gamma and variance swaps identified above result directly from the LVR. Equation \eqref{eqn:LVR_local_time} implies that for the curve $x + \ln y = K$ (where $L(q) = \frac{1}{q}$), we have:
$$
2 \times \mathrm{LVR}_T = \int_0^T \frac{d\langle P \rangle_s}{P_s},
$$
which is precisely the payoff of a gamma swap. For the curve $\ln x + y = K$, we similarly obtain:
$$
2 \times \mathrm{LVR}_T = \int_0^T \frac{d\langle P \rangle_s}{P_s^2},
$$
representing the payoff of a variance swap.

The converse also holds: if a gamma swap hedges the LVR pathwise such that $\mathrm{LVR}_T = \frac{1}{2} \int_0^T \frac{d\langle P \rangle_t}{P_t}$ , then $L(p) = \frac{1}{p}$. The canonical parametrization then yields:
\begin{eqnarray*}
x(p) = \int_p^C \frac{1}{q} dq = \ln C - \ln p, \quad y(p) = \int_0^p dq = p,
\end{eqnarray*}
for a constant $C > 0$. Eliminating $p$ confirms the bonding curve $x + \ln y = \ln C$. For a variance swap where $\mathrm{LVR}_T = \frac{1}{2} \int_0^T \frac{d\langle P \rangle_t}{P_t^2}$ , the profile $L(p) = \frac{1}{p^2}$ corresponds to the bonding curve $\ln x + y = K$.

\subsubsection{Implications for LVR Design and Pricing}
The formulation of LVR in \eqref{eqn:LVR_local_time} provides a constructive method for designing liquidity profiles, and by extension bonding curves, that satisfy specific risk-management objectives under a time-homogeneous local volatility model. This approach mirrors the ``replicating market maker'' strategy discussed in Section \ref{sec:RMM}, but redirects the focus toward the non-hedgeable LVR component.

Consider a price process $P_t$ governed by the following SDE:
$$
\frac{dP_t}{P_t} = \mu_t dt + \sigma(P_t) dW_t.
$$
The quadratic variation is given by $d\langle P \rangle_t = P_t^2 \sigma^2(P_t) dt$. By selecting a liquidity profile $L(q)$ that inversely scales with the stochastic term of the quadratic variation, specifically:
$$
L(q) = \frac{C}{q^2 \sigma^2(q)},
$$
for some constant $C$, the resulting LVR process becomes:
$$
\mathrm{LVR}_t = \frac{1}{2} \int_0^t L(P_s) d\langle P \rangle_s = \frac{1}{2} \int_0^t \frac{C}{P_s^2 \sigma^2(P_s)} P_s^2 \sigma^2(P_s) ds = \frac{C}{2} t.
$$
In this configuration, the LVR is entirely \textit{deterministic} and proportional to time, effectively eliminating the risk stemming from price volatility. This design methodology allows for the construction of \textit{LVR-neutral profiles} tailored to specific market assumptions.

\begin{example}[LVR-neutral profile in the CEV model]
The Constant Elasticity of Variance (CEV) model is a local volatility model characterized by the SDE:
$$
dP_t = \nu P_t^\beta dW_t
$$ 
for constants $\nu > 0$ and $\beta \geq 0$. Under risk-neutral probability with zero interest and dividend rates, the local volatility function is $\sigma(p) = \nu p^{\beta-1}$. Setting the liquidity profile as 
$$
L(q) = \frac C{q^2\sigma^2(q)} = \frac C{\nu^2 q^{2\beta}},
$$
renders the LVR deterministic at $\frac{C}{2}t$. The bonding curves associated with this profile vary by the elasticity parameter $\beta$:
\begin{eqnarray*}
x &=& \frac C{\nu^2}\int_p^\infty \frac1{q^{2\beta}} dq = \left\{\begin{array}{ll}
\frac C{\nu^2} \frac{p^{1-2\beta}}{2\beta - 1} & \mbox{ if } \beta > \frac12; \\
& \\
\frac C{\nu^2} (\ln M - \ln p) & \mbox{ if } \beta = \frac12, \mbox{ for some large } M > 0; \\
& \\
\frac C{\nu^2}\frac1{1-2\beta} (M^{1-2\beta} - p^{1-2\beta}) & \mbox{ if } \beta < \frac12, \mbox{ for some large } M > 0; 
\end{array}\right. \\
&& \\
y &=& \frac C{\nu^2}\int_0^p \frac1{q^{2\beta - 1}} dq 
= \left\{\begin{array}{ll}
\frac C{\nu^2} \frac1{2(1-\beta)} p^{2(1 - \beta)} & \mbox{ if } \beta < 1; \\
& \\
\frac C{\nu^2} (\ln p - \ln\epsilon) & \mbox{ if } \beta = 1, \mbox{ for some small } \epsilon > 0;; \\
& \\
\frac C{\nu^2}\frac1{2(\beta-1)} (\epsilon^{2-2\beta} - p^{2-2\beta}) & \mbox{ if } \beta > 1, \mbox{ for some small } \epsilon > 0; 
\end{array}\right. 
\end{eqnarray*}
In particular, when $\frac{1}{2} < \beta < 1$, the bonding curve is represented by the G3M form $x^\alpha y^{1-\alpha} = K$, where $\alpha = 2-2\beta$.
\end{example}

Despite its path-dependent nature, LVR can be priced as a static European contingent claim. Let $\Psi(P)$ be the second antiderivative of the liquidity profile, such that $\Psi''(P) = L(P)$. Applying Itô's formula to $\Psi(P_t)$:
\begin{eqnarray}
d\Psi(P_t) &=& \Psi'(P_t)dP_t + \frac{1}{2} \Psi''(P_t) d\langle P \rangle_t \notag \\
&=& \Psi'(P_t)dP_t + \frac{1}{2} L(P_t) d\langle P \rangle_t. \label{eq:ito_psi}
\end{eqnarray}
If the price process $P_t$ is a $\mathbb{Q}$-martingale, integrating from $0$ to $T$ and taking the expectation yields the LVR price:
$$
\mathbb{E}^{\mathbb{Q}}[\mathrm{LVR}_T] = \mathbb{E}^{\mathbb{Q}}[\Psi(P_T)] - \Psi(P_0).
$$
This reduction allows for the use of standard European option pricing techniques, such as Fourier transforms or Monte Carlo simulations, to value the ongoing cost of liquidity provision.

\subsection{Risk Neutral Pricing and Hedging for Impermanent Loss}
In this subsection, we move beyond static replication to address the pricing and hedging of liquidity provision in a dynamic market setting. Building on the equivalence between providing liquidity and maintaining a short position in a convex contingent claim, we apply option pricing theory to quantify risk and establish hedging parameters. By assuming the spot price process $P_t$ is a martingale under a risk-neutral probability $\mathbb{Q}$, we derive the fair value of IL and define implied volatility metrics for general liquidity profiles.  

\subsubsection{Pricing IL as a European-Style Contingent Claim} \label{sec:greeks}
If we treat IL as a European-style contingent claim that pays the amount $\mathrm{IL}(P_T|P_0, L)$ at maturity $T$, its risk-neutral price $\Pi_t^{\mathrm{IL}}(L,T)$ at any time $t \in [0,T]$ is determined by the conditional expectation under the risk-neutral measure $\mathbb{Q}$:
\begin{align}
\Pi^{\mathrm{IL}}_t(L, T) &:= \mathbb{E}^{\mathbb{Q}}\left[ \mathrm{IL}(P_T|P_0,L) \mid \mathcal{F}_t \right] \notag \\
&= \int_0^{P_0} L(q) \mathbb{E}^{\mathbb{Q}}\left[ (q - P_T)^+ \mid \mathcal{F}_t \right] dq + \int{P_0}^\infty L(q) \mathbb{E}^{\mathbb{Q}}\left[ (P_T - q)^+ \mid \mathcal{F}_t \right] dq \notag \\
&= \int_0^{P_0} L(q) P_t(q, T) dq + \int{P_0}^\infty L(q) C_t(q, T) dq, \label{eq:IL_price}
\end{align}
where $C_t(q, T)$ and $P_t(q, T)$ represent the time-$t$ market prices of European call and put options with strike $q$ and maturity $T$.

Applying Jensen’s inequality, it follows that for $t \in [0, T]$:
\begin{equation*}
\Pi_t^{\mathrm{IL}}(L, T) \geq \mathrm{IL}(P_t|P_0, L).
\end{equation*}
This inequality underscores a vital distinction for LPs: the risk-neutral fair value of their potential loss is always greater than or equal to the current realized IL prior to expiry.

\paragraph{\textbf{Delta of IL}}
The Delta of the risk-neutral price $\Pi_t^{\mathrm{IL}}$ with respect to the pool price $P_t$ is given by the weighted integral of the constituent option deltas:
\begin{equation}
\Delta^{\Pi^{\mathrm{IL}}}_t = \frac{\partial \Pi^{\mathrm{IL}}t}{\partial P_t} = \int_0^{P_0} L(q) \Delta^P_t(q,T) dq + \int_{P_0}^\infty L(q) \Delta^C_t(q, T) dq, \label{eq:IL_delta_full}
\end{equation}
where $\Delta^i_t(q, T)$ is the Delta of a put or call option ($i \in \{P, C\}$). This result confirms that within a local volatility framework, the market is complete, and the IL risk can be dynamically hedged by holding $\Delta^{\Pi^{\mathrm{IL}}}_t$ units of the underlying asset $X$.

We distinguish this from the Delta of the current realized IL value, $\Delta_t^{\mathrm{IL}}$, which is defined as:
\begin{eqnarray*}
\Delta_t^{\rm IL} &:=& \frac{\p}{\p P_t}{\rm IL}(P_t|P_0, L) = \frac{\p}{\p P_t}\left\{\int_0^{P_0} L(q) (q - P_t)^+ dq + \int_{P_0}^\infty L(q) (P_t - q)^+ dq \right\} \\
&=& -\int_0^{P_0} L(q) \theta(q - P_t) dq + \int_{P_0}^\infty L(q) \theta(P_t - q) dq \\
&=& \int_{P_0}^{P_t} L(q) dq,
\end{eqnarray*}
where recall that $\theta(\cdot)$ denotes the Heaviside function. Importantly, $\Delta_t^{\mathrm{IL}}$ reflects the instantaneous sensitivity of current reserves but does not generate the terminal payoff required to replicate the IL at maturity in a complete market.

\begin{example}[Delta of Concentrated Liquidity] \label{ex:IL-delta} 
Consider a concentrated liquidity position $L(q) = \frac{\ell}{2q^{3/2}} \1_{[p_a, p_b]}(q)$ for a constant $\ell > 0$ and price boundaries $0 < p_a < P_0 < p_b$. The Delta of the realized Impermanent Loss, $\Delta_t^{\mathrm{IL}}$, represents the instantaneous change in current value:
\begin{align*}
\Delta_t^{\mathrm{IL}} &= \int_{P_0}^{P_t} L(q) , dq = \int_{P_0}^{P_t} \frac{\ell}{2q^{3/2}} \1_{[p_a, p_b]}(q) dq = \begin{cases}
\frac{\ell}{\sqrt{P_0}} - \frac{\ell}{\sqrt{\min\{P_t, p_b\}}} & \text{if } P_t \geq P_0, \\
\frac{\ell}{\sqrt{\max\{P_t, p_a\}}} - \frac{\ell}{\sqrt{P_0}} & \text{if } P_t \leq P_0,
\end{cases}
\end{align*}
whereas the risk-neutral Delta, $\Delta_t^{\Pi^{\mathrm{IL}}}$, which incorporates market expectations and time-to-maturity, is given by:
\begin{align*}
\Delta_t^{\Pi^{\mathrm{IL}}} &= \int_0^{P_0} \Delta_t^P(q, T) L(q) dq + \int_{P_0}^\infty \Delta_t^C(q, T) L(q) dq \notag \\
&= \ell \int_{p_a}^{P_0} \Delta_t^P(q, T) \frac{dq}{2q^{3/2}} + \ell \int_{P_0}^{p_b} \Delta_t^C(q, T) \frac{dq}{2q^{3/2}}.
\end{align*}
\end{example}

\paragraph{\textbf{Gamma of IL}}
The Gamma of the risk-neutral price, $\Gamma_t^{\Pi^{\mathrm{IL}}}$, is the second-order sensitivity to the pool price:
\begin{equation*}
\Gamma^{\Pi^{\mathrm{IL}}}_t = \frac{\partial^2 \Pi_t^{\mathrm{IL}}}{\partial P_t^2} = \int_0^{P_0} L(q) \Gamma^P_t(q, T) dq + \int_{P_0}^\infty L(q) \Gamma^C_t(q, T) dq,
\end{equation*}
where $\Gamma^i_t(q, T)$ is the gamma of the option struck at $q$ and expiry $T$, for $i = \{C, P\}$. 

For the current realized value, the Gamma simplifies directly to the liquidity profile density at the current price:
\begin{equation*}
\Gamma_t^{\mathrm{IL}} = \frac{\partial}{\partial P_t} \Delta_t^{\mathrm{IL}} = L(P_t),
\end{equation*} 

\begin{example}[Gamma of Concentrated Liquidity]
Continuing with the concentrated liquidity profile $L(q) = \frac{\ell}{2q^{3/2}} \1_{[p_a, p_b]}(q)$ from Example \ref{ex:IL-delta}, the Gamma of the realized IL corresponds directly to the liquidity density at the current price:
$$
\Gamma_t^{\mathrm{IL}} = L(P_t) = \frac{\ell}{2\sqrt{P_t^3}} \1_{[p_a, p_b]}(P_t).
$$
The Gamma of the risk-neutral price $\Pi_t^{\mathrm{IL}}$ is the expectation-weighted sum of vanilla option Gammas:
\begin{align*}
\Gamma_t^{\Pi^{\mathrm{IL}}} &= \int_0^{P_0} \Gamma_t^P(q, T) L(q) dq + \int_{P_0}^\infty \Gamma_t^C(q, T) L(q) dq \\
&= \ell \int_{p_a}^{P_0} \Gamma_t^P(q, T) \frac{dq}{2q^{3/2}} + \ell \int_{P_0}^{p_b} \Gamma_t^C(q, T) \frac{dq}{2q^{3/2}}.
\end{align*}
\end{example}

\textbf{Vega of IL}
In models such as Black--Scholes, the Vega of the IL position is likewise a weighted sum of vanilla Vegas:
\begin{equation*}
\nu^{\Pi^{\mathrm{IL}}}_t = \int_0^{P_0} L(q) \nu^P_t(q, T) dq + \int_{P_0}^\infty L(q) \nu^C_t(q, T) dq,
\end{equation*}
where $\nu^i_t(q, T)$ is the vega of the option struck at $q$ and expiry $T$, for $i = \{C, P\}$, enabling LPs to quantify their exposure to forward-looking volatility shifts.

\subsubsection{Implied Volatility for Liquidity Profile} \label{sec:imp-vol}
For a given liquidity profile $L$, we define the \textit{Black--Scholes implied volatility} $\sigma_{\mathrm{BS}}(L, T)$ as the unique value $\sigma$ that satisfies the following equation:
\begin{equation} \label{eqn:implied_vol_BS}
\int_0^{P_0} L(q) P_{\mathrm{BS}}(q, T, \sigma) dq + \int_{P_0}^\infty L(q) C_{\mathrm{BS}}(q, T, \sigma) dq = \int_0^{P_0} L(q) P(q, T) dq + \int_{P_0}^\infty L(q) C(q, T) dq,
\end{equation}
where $P_{\mathrm{BS}}$ and $C_{\mathrm{BS}}$ denote the risk-neutral prices of European puts and calls under the Black--Scholes model. The right-hand side represents the market price of the IL contingent claim, $\Pi^{\mathrm{IL}}$, constructed from discrete market option quotes. Similarly, the \textit{Bachelier implied volatility} $\sigma_{\mathrm{B}}(L, T)$ is defined by solving:
\begin{equation} \label{eqn:implied_vol_B}
\int_0^{P_0} L(q) P_{\mathrm{B}}(q, T, \sigma) dq + \int_{P_0}^\infty L(q) C_{\mathrm{B}}(q, T, \sigma) dq = \int_0^{P_0} L(q) P(q, T) dq + \int_{P_0}^\infty L(q) C(q, T) dq,
\end{equation}
where $P_{\mathrm{B}}$ and $C_{\mathrm{B}}$ are option prices under the Bachelier model. We note that the implied volatilities defined in \eqref{eqn:implied_vol_BS} and \eqref{eqn:implied_vol_B} are homogeneous of degree zero in $L$, as the pricing equations on both sides are linear with respect to the liquidity profile. This metric allows for a standardized ``price of liquidity'' that is consistent with the Black--Scholes paradigm, bridging the gap between endogenous fee generation and exogenous asset volatility.

Since the left-hand side of both \eqref{eqn:implied_vol_BS} and \eqref{eqn:implied_vol_B} is strictly increasing with respect to $\sigma$, we establish the following existence and uniqueness result:

\begin{proposition}[Existence and Uniqueness of Implied Volatility] \label{prop:iv_existence}
For a given liquidity profile $L$ and maturity $T$, there exists a unique solution $\sigma_{\mathrm{BS}}(L,T)$ (resp. $\sigma_{\mathrm{B}}(L,T)$) for equation \eqref{eqn:implied_vol_BS} (resp. equation \eqref{eqn:implied_vol_B}).
\end{proposition}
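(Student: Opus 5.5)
The plan is to treat the left-hand side of \eqref{eqn:implied_vol_BS} as a function $F(\sigma) := \int_0^{P_0} L(q) P_{\mathrm{BS}}(q,T,\sigma)\,dq + \int_{P_0}^\infty L(q) C_{\mathrm{BS}}(q,T,\sigma)\,dq$ of $\sigma \in (0,\infty)$, and to show three things: $F$ is continuous and strictly increasing, its limit as $\sigma\to 0^+$ lies strictly below the market value $\Pi := \int_0^{P_0} L P\,dq + \int_{P_0}^\infty L C\,dq$, and its limit as $\sigma\to\infty$ lies at or above $\Pi$. Existence then follows from the intermediate value theorem, and uniqueness from strict monotonicity. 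The Bachelier case \eqref{eqn:implied_vol_B} is handled in the same way, with the Bachelier formulas in place of the Black--Scholes ones.

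\textbf{Step 1: monotonicity and continuity.} For each strike $q$, the Black--Scholes vega $P_0\sqrt{T}\,\phi(d_1)$ is strictly positive, and puts and calls share the same vega. Hence $\sigma\mapsto P_{\mathrm{BS}}(q,T,\sigma)$ and $\sigma\mapsto C_{\mathrm{BS}}(q,T,\sigma)$ are strictly increasing and continuous. Because $L\ge 0$ and $L$ is not identically zero, integrating against $L$ keeps strict monotonicity. For continuity I would use dominated convergence with the bounds $P_{\mathrm{BS}}\le q$ and $C_{\mathrm{BS}}\le P_0$. This requires $\int_0^{P_0} qL(q)\,dq = y(P_0)<\infty$ and $\int_{P_0}^\infty L(q)\,dq = x(P_0)<\infty$, which hold because the reserves in \eqref{eq:reserves} are finite. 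For Bachelier the same argument works with vega $\sqrt{T}\,\phi(\cdot)>0$.

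\textbf{Step 2: the limits.} As $\sigma\to 0^+$, the out-of-the-money options used here (puts with $q<P_0$, calls with $q>P_0$) tend to their intrinsic value, which is zero. By dominated convergence, $F(0^+)=0=\mathrm{IL}(P_0|P_0,L)$.

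As $\sigma\to\infty$, the Black--Scholes prices increase to $q$ for puts and to $P_0$ for calls. By monotone convergence, $F(\infty)=\int_0^{P_0} qL\,dq+P_0\int_{P_0}^\infty L\,dq = V_L(P_0)$, using \eqref{eq:pool_value_min}. For Bachelier, $F(\infty)=\infty$, since Bachelier prices grow without bound in $\sigma$ and Fatou's lemma applies.

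\textbf{Step 3: bracketing the market value.} I need $0<\Pi<F(\infty)$. This is where the hypotheses on market prices enter, and it is the main obstacle, because the statement implicitly assumes quotes free of static arbitrage.

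The lower bound uses $P(q,T)\ge (q-P_0)^+$ together with strict time value. Alternatively, since by \eqref{eq:IL_price} the quantity $\Pi$ is the price of a nonnegative claim that is not almost surely zero under a nondegenerate $\mathbb{Q}$, it is strictly positive.

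The upper bound uses the no-arbitrage bounds $P(q,T)\le q$ and $C(q,T)\le P_0$, with strict inequality whenever $\mathbb{Q}(P_T>0)>0$. Then $\Pi<V_L(P_0)$. This step is exactly the requirement that a risk-neutral $\mathbb{Q}$ exists with nondegenerate $P_T$. I would state it explicitly as the standing assumption, since the informal justification before the proposition only addresses monotonicity.

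With $0<\Pi<F(\infty)$ in hand, the intermediate value theorem gives a unique $\sigma$ in both cases.
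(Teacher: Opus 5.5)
Your Black--Scholes argument is correct and follows the paper's route, which is strict monotonicity of the left-hand side in $\sigma$. It also supplies what the paper leaves implicit. The paper invokes only monotonicity, which gives uniqueness; your continuity argument, the limits $F(0^+)=0$ and $F(\infty)=V_L(P_0)$, and the explicit no-arbitrage bracketing $0<\Pi<V_L(P_0)$ give existence.

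The genuine gap is the claim that the Bachelier case goes through ``in the same way.'' Your domination step rests on $P_{\mathrm{BS}}(q,T,\sigma)\le q$, and this fails for Bachelier puts. The Bachelier terminal price is negative with positive probability, and puts are nondecreasing in the strike. Hence, with $Z$ standard normal, $P_{\mathrm{B}}(q,T,\sigma)\ge P_{\mathrm{B}}(0,T,\sigma)=\mathbb{E}\bigl[(-P_0-\sigma\sqrt{T}Z)^+\bigr]>0$ for all $q\ge 0$. So the put leg $\int_0^{P_0}L(q)P_{\mathrm{B}}(q,T,\sigma)\,dq$ is finite if and only if $\int_0^{P_0}L(q)\,dq<\infty$. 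Finite reserves do not imply this; they only give $\int_0^{P_0}qL(q)\,dq<\infty$.

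For the full-range CPMM profile $L(q)=\ell/(2q^{3/2})$, or the G3M profile $L(q)\propto q^{\alpha-2}$, the left-hand side of \eqref{eqn:implied_vol_B} is $+\infty$ for every $\sigma>0$. The market side is at most $V_L(P_0)<\infty$ by your own Step 3. So no Bachelier implied volatility exists, and the Bachelier half of the proposition is false at the stated generality. The paper's one-line monotonicity justification has the same hole: in this case the left-hand side is identically $+\infty$, hence not strictly increasing.

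To close the gap, add the hypothesis $\int_0^{P_0}L(q)\,dq<\infty$. For instance, take $L$ supported in $[p_m,p_M]$ with $p_m>0$; this is the standing assumption of the fine-structure subsection and holds for the Uniswap v3 data. Then replace your dominating function. For $\sigma\le\sigma_1$, monotonicity in strike and in $\sigma$ gives $P_{\mathrm{B}}(q,T,\sigma)\le P_{\mathrm{B}}(P_0,T,\sigma_1)=\sigma_1\sqrt{T/(2\pi)}$ for $q\le P_0$, and $C_{\mathrm{B}}(q,T,\sigma)\le\sigma_1\sqrt{T/(2\pi)}$ for $q\ge P_0$. Both legs are therefore dominated by $\sigma_1\sqrt{T/(2\pi)}\,L(q)$, which is integrable exactly under this hypothesis together with $x(P_0)<\infty$. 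With that in place, write $F_{\mathrm{B}}$ for the Bachelier analogue of your $F$. Continuity, $F_{\mathrm{B}}(0^+)=0$, $F_{\mathrm{B}}(\infty)=\infty$, the bracketing $0<\Pi<\infty$, and the intermediate value theorem then finish the Bachelier case as you intended.
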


The central insight is that, for each liquidity profile $L$ and a given maturity $T$, we can assign an implied volatility $\sigma(L, T)$ by equating the theoretical IL price under a specific model (e.g., Black-Scholes or Bachelier) to the option-implied risk neutral price of IL. 

\subsubsection{Fine Structure of Implied Volatility}
While the global implied volatilities $\sigma_{\mathrm{BS}}(L, \cdot)$ and $\sigma_{\mathrm{B}}(L, \cdot)$ characterize the risk profile of the liquidity position in aggregation, they do not distinguish between the risks associated with different segments of the price spectrum. To achieve a more granular assessment, we define the \textit{fine structure of implied volatility}, which highlights the market-implied risk associated with specific intervals within the liquidity profile. This approach is particularly instructive for platforms like Uniswap v3, where liquidity is non-uniformly distributed across discrete tick ranges.

Assume the liquidity profile $L$ is supported on the interval $[p_m, p_M]$ where $0 < p_m < p_M < \infty$. Let $\{\pi_k\}$ be a sequence of nested partitions of this interval such that $\pi_k = \{p_m = p_0 < p_1 < \cdots < p_{i_k} = p_M\}$. For each sub-interval $j \in \{1, \dots, i_k\}$, we define the segmented liquidity profile $L_j^{\pi_k}$ as:
$$
L_j^{\pi_k}(q) = L(q) \1_{\{p_{j-1} \leq q \leq p_j\}},
$$
with its corresponding risk-neutral market price $\Pi_j^{\pi_k}$ defined by the sum of market put and call options integrated over the segment: 
$$
\Pi_j^{\pi_k} := \int_0^{P_0} P(q, T) L_j^{\pi_k}(q) dq + \int_{P_0}^\infty C(q, T) L_j^{\pi_k}(q) dq.
$$
The fine structure of the Black--Scholes implied volatility for the profile $L$ is then defined as the collection of values $\{\sigma_{j, \mathrm{BS}}^{\pi_k}\}$ that solve the pricing equation for each segment $j$:
$$
\Pi_j^{\pi_k, \mathrm{BS}}(\sigma) = \int_0^{P_0} P_{\mathrm{BS}}(q, T, \sigma) L_j^{\pi_k}(q) dq + \int_{P_0}^\infty C_{\mathrm{BS}}(q, T, \sigma) L_j^{\pi_k}(q) dq = \Pi_j^{\pi_k}.
$$
By the linearity of the integral, the total risk-neutral price $\Pi^{\mathrm{IL}}$ is recovered as the sum of the segmented prices:
$$
\Pi^{\mathrm{IL}} = \sum_{j=1}^{i_k} \Pi_j^{\pi_k} = \sum_{j=1}^{i_k} \Pi_j^{\pi_k, \mathrm{BS}}(\sigma_{j, \mathrm{BS}}^{\pi_k}).
$$
However, there are no simple expressions relating the aggregate implied volatility $\sigma_{\mathrm{BS}}(L, \cdot)$ to its fine structure components, the $\sigma_{j, \mathrm{BS}}^{\pi_k}$'s. We conclude that the fine structure for the Bachelier implied volatility can be constructed similarly. We reiterate that examples on the fine structures of implied volatilities for liquidity profiles of various liquidity pools in the Uniswap v3 platform can be found in Section \ref{sec:Numerical}.
\section{Pricing via Last Passage Time}

In this section, we move beyond the standard treatment of Impermanent Loss (IL) as a perpetual Bermudan or American option \cite{bichuch2025price, capponi2025optimal} to propose a valuation framework based on the theory of diffusion processes and last passage time. The core rationale for this approach is the "reset" property of Constant Function Market Makers: \textit{if an LP provides liquidity at a specific price $p$, the associated IL returns to zero whenever the market price revisits that entry level}.

We argue that it is economically rational for an LP to withdraw liquidity when the pool price reaches the entry price $p$ for the final time. Under this rule, the LP's realized IL is minimized (approaching zero) while the duration of liquidity provision is maximized, allowing for optimal compensation through transaction fee collection. Although a last passage time is not a stopping time—making its real-time implementation more subtle than traditional rules—its distribution and associated statistics are well-characterized by the transition densities of diffusion processes.

For the purpose of clear exposition, we assume the pool price $P_t$ follows a geometric Brownian motion (GBM):
$$
\frac{dP_t}{P_t} = \mu dt + \sigma dW_t
$$
where $\mu$ and $\sigma > 0$ are constants and $W_t$ is a standard Brownian motion. The LP is compensated by a transaction fee at a constant rate $\varphi$. Consequently, the LP's discounted P\&L up to a random time $\tau$ is defined as:
\begin{equation} \label{eqn:lpPnL}
-e^{-r\tau} \mathrm{IL}(P_\tau|P_0, L) + \int_0^\tau e^{-rt} \varphi dt = -e^{-r\tau} \mathrm{IL}(P_\tau|P_0, L) + \frac{\varphi}{r}(1 - e^{-r\tau}),
\end{equation}
where the first term represents the discounted loss realized upon exiting the position and the second term represents the accumulated discounted fees.

\subsection{Transiency and the Last Passage Time}
To analyze the behavior of the last passage time, we first characterize the long-term asymptotic properties of the pool price process $P_t$. The scale function $s(p)$ for the geometric Brownian motion defined in the market model is given by:
$$
s(p) = \frac{1}{1-\alpha}\left(p^{1-\alpha} - 1\right),
$$
where $\alpha = \frac{2\mu}{\sigma^2} \neq 1$. The transiency of the process depends on the relationship between the drift $\mu$ and the volatility $\sigma$:
\begin{itemize}
    \item \textit{Downward Transient}: If $\mu < \frac{\sigma^2}{2}$ (i.e., $\alpha < 1$), then $s(0+) = \frac{1}{\alpha - 1} > -\infty$ and $s(\infty) = \infty$. In this regime, the process $P_t$ is transient toward zero, such that $\mathbb{P}\left[ \lim_{t\to\infty} P_t = 0 \mid P_0 = p \right] = 1$ for all $p \in (0, \infty)$.
    \item \textit{Upward Transient}: If $\mu > \frac{\sigma^2}{2}$ (i.e., $\alpha > 1$), then $s(0+) = -\infty$ and $s(\infty) = \frac{1}{\alpha - 1} < \infty$. Here, the process $P_t$ is transient toward infinity, with $\mathbb{P}\left[ \lim_{t\to\infty} P_t = \infty \mid P_0 = p \right] = 1$ for all $p \in (0, \infty)$.
\end{itemize}
For any $\epsilon \in \mathbb{R}$, we define the last passage time $\pi^\epsilon$ at the price level $P_0 e^{\epsilon}$ as:
$$
\pi^\epsilon := \sup\{t \geq 0: P_t = P_0 e^{\epsilon}\} = \sup\{t \geq 0: \nu t + \sigma W_t = \epsilon\},
$$
where $\nu = \mu - \frac{\sigma^2}{2}$. By convention, $\pi^\epsilon = 0$ if the process $P_t$ never visits the price level $P_0 e^{\epsilon}$. In the critical case where $\mu = \frac{\sigma^2}{2}$, the process is recurrent, and the last passage time $\pi^\epsilon = \infty$ almost surely for all $\epsilon \in \mathbb{R}$.

In the subsequent analysis, we focus on the case $\mu > \frac{\sigma^2}{2}$ (upward transient), noting that the results for the downward transient case $\mu < \frac{\sigma^2}{2}$ can be obtained by symmetry.

\subsection{Optimal Withdrawal Strategy}
In the upward transient regime where $\mu > \frac{\sigma^2}{2}$ (and thus $\nu = \mu - \frac{\sigma^2}{2} > 0$), we identify the log-exit price $\epsilon$ that maximizes the liquidity provider's returns. The expected discounted P\&L, $v(\epsilon)$, is defined as the difference between accumulated transaction fees and the realized impermanent loss at the last passage time $\pi^\epsilon$:
\begin{align}
v(\epsilon) &:= \mathbb{E}^{\mathbb{Q}}\left[ -e^{-r\pi^\epsilon} \mathrm{IL}(P_{\pi^\epsilon}|P_0, L) + \frac{\varphi}{r}\left(1 - e^{-r\pi^\epsilon}\right) \mid P_0 \right] \notag \\
&= \frac{\varphi}{r} - \left\{ \int^{P_0 e^{\epsilon}}_{P_0} (P_0 e^{\epsilon} - q) L(q) dq + \frac{\varphi}{r}\right\} \phi_\epsilon(r), \label{eqn:func_v_refined}
\end{align}
where $\phi_\epsilon(r)$ is the moment generating function of the last passage time $\pi^\epsilon$. Following \cite[(36) on p.12]{egami2025decomposition} or \cite[(2.25) on p.28]{profeta2010option}, the function $\phi_\epsilon(r)$ is given by:
$$
\phi_\epsilon(r) = \left\{ \begin{array}{ll}
\displaystyle \frac\nu{\sqrt{\nu^2 + 2r\sigma^2 }} \, e^{\frac\epsilon{\sigma^2}\left(\nu - \sqrt{\nu^2 + 2r\sigma^2}\right)} & \mbox{ if } \epsilon \geq 0; \\
& \\
\displaystyle 1 -  e^{\frac{2\nu}{\sigma^2}\epsilon} + \frac\nu{\sqrt{\nu^2 + 2r\sigma^2 }} \, e^{\frac\epsilon{\sigma^2}\left(\nu + \sqrt{\nu^2 + 2r\sigma^2}\right)} & \mbox{ if } \epsilon < 0.
\end{array}\right.
$$
The value $v(0)$ represents the expected P\&L if the LP withdraws at the final moment the price revisits the initial entry level $P_0$.

\subsubsection{Analytic Characterization of the Optimal Exit}
The determination of the optimal exit level $\epsilon^*$ depends on the sign of the derivative $v'(\epsilon)$:
\begin{eqnarray}
v'(\epsilon) 
&=& \frac{\nu\left(\sqrt{\nu^2 + 2 r \sigma^2} - \nu \right)}{\sigma^2\sqrt{\nu^2 + 2r\sigma^2 }} e^{\frac{\epsilon}{\sigma^2}\left(\nu - \sqrt{\nu^2 + 2r\sigma^2}\right)} \times \nonumber \\
&& \left\{ \frac{\sqrt{\nu^2 + 2 r \sigma^2} - \nu - \sigma^2}{\sqrt{\nu^2 + 2 r \sigma^2} -\nu} \, P_0 e^{\epsilon} \int^{P_0 e^{\epsilon}}_{P_0}  L(q) dq - \int^{P_0 e^{\epsilon}}_{P_0} q L(q) dq +  \frac\varphi r \right\}.  \label{eqn:vprime_sign}
\end{eqnarray}
Note that $v'(0^+) = \lim_{\epsilon \to 0^+} v'(\epsilon) = - \frac{\nu}{\sigma^2\sqrt{\nu^2 + 2r\sigma^2}} \frac{\varphi}{r} (\nu - \sqrt{\nu^2 + 2r\sigma^2}) > 0$, indicating that $v$ is increasing to the right of the origin.

Utilizing the identity $\nu^2 + 2r\sigma^2 - (\sigma^2 + \nu)^2 = 2(r - \mu)\sigma^2$, the behavior of $v(\epsilon)$ is categorized by the relationship between the drift $\mu$ and the discount rate $r$:
\begin{enumerate}
    \item \textit{Asymptotic Maximization ($\mu < r$):} If the bracketed term in \eqref{eqn:vprime_sign} remains positive for all $\epsilon > 0$, $v$ is monotonically increasing. This typically occurs when $\mu < r$, as the first term in the brackets is positive and increasing. In this regime, the LP optimally maintains the position indefinitely, with the expected P\&L approaching the supremum $\frac{\varphi}{r}$ as $\epsilon \to \infty$.
    \item \textit{Unique Interior Maximizer ($\mu \geq r$):} If a unique $\epsilon^* > 0$ exists such that $v'(\epsilon^*) = 0$, then $v$ is maximized at this point. This uniqueness is guaranteed for $\mu \geq r$ because the first term in the brackets decreases toward $-\infty$ as $\epsilon \to \infty$.
\end{enumerate}
When $\mu = r$, the identity $\nu + \sigma^2 - \sqrt{\nu^2 + 2 r \sigma^2} = 0$ holds. Consequently, the first-order condition $v'(\epsilon) = 0$ simplifies to $\int_{P_0}^{P_0 e^{\epsilon}} q L(q) \, dq = \frac{\varphi}{r}$. This yields the explicit solution:
$$
P_0 e^{\epsilon^*} = y^{-1}\left(\frac{\varphi}{r} - y(P_0) \right),
$$
where $y(\cdot)$ is the reserve function defined in \eqref{eq:reserves}. These results are summarized in Theorem \ref{thm:optimal_exit}.

\begin{theorem} \label{thm:optimal_exit}
Let the pool price $P_t$ follow the geometric Brownian motion $\frac{dP_t}{P_t} = \mu dt + \sigma dW_t$, with $\sigma > 0$ and $\mu > \frac{\sigma^2}{2}$. The expected discounted P\&L $v(\epsilon)$, stopped at the last passage time $\pi^\epsilon$ of the price level $P_0 e^\epsilon$, is given by:
$$
v(\epsilon) = \mathbb{E}^{\mathbb{Q}} \left[ -e^{-r\pi^\epsilon} \mathrm{IL}(P_{\pi^\epsilon}|P_0, L) + \frac{\varphi}{r}(1 - e^{-r\pi^\epsilon}) \mid P_0 \right].
$$
The optimal exit level $\epsilon^*$ is characterized as follows:
\begin{enumerate}
    \item If $\mu \ge r$, there exists a unique maximizer $\epsilon^* > 0$ satisfying the transcendental equation:
    \begin{equation} \label{eqn:optimal_epsilon}
    \frac{\sqrt{\nu^2 + 2r\sigma^2} - \nu - \sigma^2}{\sqrt{\nu^2 + 2r\sigma^2} - \nu} P_0 e^\epsilon \int_{P_0}^{P_0 e^\epsilon} L(q) dq - \int_{P_0}^{P_0 e^\epsilon} q L(q) dq + \frac{\varphi}{r} = 0,
    \end{equation} 
    where $\nu = \mu - \frac{\sigma^2}{2}$.
    \item In the specific case $\mu = r$, the optimal exit price admits the explicit solution: $$P_0 e^{\epsilon^*} = y^{-1} \left( \frac{\varphi}{r} - y(P_0) \right), $$ where $y$ is the reserve function defined in \eqref{eq:reserves}.
    \item If $\mu < r$, $v(\epsilon)$ may be monotonically increasing for all $\epsilon \in \mathbb{R}$, in which case \eqref{eqn:optimal_epsilon} has no solution and $v$ attains its supremum $\frac{\varphi}{r}$ as $\epsilon \to \infty$.
\end{enumerate}
\end{theorem}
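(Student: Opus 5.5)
The plan is to reduce everything to a one-variable calculus problem in $\epsilon$, using two facts. First, at the last passage time the price is known exactly. Second, the Laplace transform $\phi_\epsilon(r)$ of $\pi^\epsilon$ is available in closed form from \cite{egami2025decomposition, profeta2010option}.

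\textbf{Step 1 (formula for $v$).} In the upward transient regime $\nu>0$, the level $P_0e^{\epsilon}$ with $\epsilon\ge 0$ is visited almost surely, and $P_{\pi^\epsilon}=P_0e^{\epsilon}$ on $\{\pi^\epsilon>0\}$. Hence, by \eqref{eqn:IL-short}, $\mathrm{IL}(P_{\pi^\epsilon}|P_0,L)=G(\epsilon):=\int_{P_0}^{P_0e^\epsilon}(P_0e^\epsilon-q)L(q)\,dq$ is deterministic. Pulling it out of the expectation gives \eqref{eqn:func_v_refined}. For $\epsilon<0$ the convention $\pi^\epsilon=0$ on the non-visiting event needs care, because there $\mathrm{IL}(P_0|P_0,L)=0$ rather than $G(\epsilon)$. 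I will handle this separately by checking that $v(\epsilon)\le v(0)$ for $\epsilon<0$. The fee part is then at most $\tfrac{\varphi}{r}(1-\phi_\epsilon(r))$ with a smaller escape term, so I expect $v(\epsilon)\le v(0)$ there, and the maximization can be restricted to $\epsilon\ge0$.

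\textbf{Step 2 (derivative).} For $\epsilon\ge0$, write $\phi_\epsilon(r)=c\,e^{-a\epsilon}$ with $c=\nu/\sqrt{\nu^2+2r\sigma^2}$ and $a=(\sqrt{\nu^2+2r\sigma^2}-\nu)/\sigma^2>0$. Then
$$v'(\epsilon)=a c e^{-a\epsilon}\Big[\tfrac{\varphi}{r}+G(\epsilon)-\tfrac1a G'(\epsilon)\Big].$$
Here $G'(\epsilon)=P_0e^\epsilon\int_{P_0}^{P_0e^\epsilon}L$ and $G=P_0e^\epsilon\int L-\int qL$, so the bracket $h(\epsilon)$ is exactly the bracket in \eqref{eqn:vprime_sign}. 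Its coefficient is $1-\tfrac1a=(\sqrt{\nu^2+2r\sigma^2}-\nu-\sigma^2)/(\sqrt{\nu^2+2r\sigma^2}-\nu)$. The identity $\nu^2+2r\sigma^2-(\nu+\sigma^2)^2=2(r-\mu)\sigma^2$ shows that $1-\tfrac1a\le0$ iff $\mu\ge r$ and $1-\tfrac1a>0$ iff $\mu<r$, with $a=1$ exactly when $\mu=r$.

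\textbf{Step 3 (cases).} We have $h(0)=\varphi/r>0$. Differentiating,
$$h'(\epsilon)=\big(1-\tfrac1a\big)P_0e^\epsilon\int_{P_0}^{P_0e^\epsilon}L-\tfrac1a(P_0e^\epsilon)^2L(P_0e^\epsilon).$$
If $\mu\ge r$, then $h'\le0$, so $h$ is nonincreasing, and strictly decreasing wherever $L>0$. Consequently $v'>0$ before the first zero of $h$ and $v'<0$ after it, which makes that zero the unique maximizer, satisfying \eqref{eqn:optimal_epsilon}. For $\mu=r$ the first term of $h$ vanishes, so $h=0$ reads $\int_{P_0}^{P_0e^\epsilon}qL=\varphi/r$, i.e. $y(P_0e^{\epsilon})-y(P_0)=\varphi/r$ by \eqref{eq:reserves}. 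Inverting $y$ gives item (2), with the sign convention of the statement to be matched carefully. If $\mu<r$, the first term of $h$ is positive and may dominate, so $h>0$ and $v$ is increasing. Since $G(\epsilon)\le P_0e^{\epsilon}x(P_0)$ and $a>1$ in this case, we get $G(\epsilon)e^{-a\epsilon}\to0$, hence $v(\epsilon)\to\varphi/r$; this gives item (3).

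\textbf{Main obstacle.} The delicate point is the existence part of item (1). Monotonicity of $h$ alone does not force a sign change: one needs $h(\epsilon)\to$ something negative. When $\mu>r$ the term $(1-\tfrac1a)P_0e^\epsilon\int L$ tends to $-\infty$ as soon as $L$ has nontrivial mass above $P_0$. When $\mu=r$, however, one needs $y(\infty)-y(P_0)>\varphi/r$. I would state this as the implicit nondegeneracy assumption on $L$ and assume $L>0$ near the crossing to obtain strict uniqueness. The negative-$\epsilon$ comparison in Step 1 is the secondary technical check.
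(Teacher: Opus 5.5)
This is correct and follows the paper's route: the closed-form last-passage transform $\phi_\epsilon(r)$, the factorization \eqref{eqn:vprime_sign}, and the sign of the first bracket coefficient from $\nu^2+2r\sigma^2-(\nu+\sigma^2)^2=2(r-\mu)\sigma^2$, done more carefully. Your $\epsilon<0$ check, which the paper skips, closes at once: the price stays above $P_0$ after $\pi^0$, so $\pi^\epsilon\le\pi^0$ a.s.\ and $v(\epsilon)\le\frac{\varphi}{r}\bigl(1-\phi_\epsilon(r)\bigr)\le\frac{\varphi}{r}\bigl(1-\phi_0(r)\bigr)=v(0)$.

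The two points you flag are defects of the statement rather than of your proof. At $\mu=r$ the paper's justification (first bracket term $\to-\infty$) fails because that term vanishes identically, so existence genuinely requires $\int_{P_0}^\infty qL(q)\,dq>\varphi/r$, and $x(P_0)>0$ when $\mu>r$. Your first-order condition $y(P_0e^{\epsilon^*})=y(P_0)+\varphi/r$ is right, so item (2) should read $P_0e^{\epsilon^*}=y^{-1}\bigl(\frac{\varphi}{r}+y(P_0)\bigr)$; the minus sign in the statement is a typo, not a convention to be matched.
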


\begin{example} \label{ex:withdrawal_cases}
Consider an initial pool price $P_0 = 1$ and a liquidity profile $L(q) = e^{-\frac{1}{10}|q-1|}$. With volatility $\sigma = 10\%$ and fee rate $\varphi = 2\%$, we evaluate the optimal exit strategy under varying drift $\mu$ and discount rate $r$:
\begin{itemize}
    \item \textit{Case 1 ($\mu=2\%, r=1\%$):} Since $\mu > r$, Equation \eqref{eqn:optimal_epsilon} yields a unique finite maximizer $\epsilon^* \approx 0.65$, as seen in the left panel of Figure \ref{fig:func_v}.
    \item \textit{Case 2 ($\mu=2\%, r=3\%$):} Despite $\mu < r$, the specific interplay of parameters allows for a unique interior maximizer $\epsilon^* \approx 0.45$, illustrated in the middle panel of Figure \ref{fig:func_v}.
    \item \textit{Case 3 ($\mu=1\%, r=4\%$):} Here, the drift is significantly lower than the discount rate. The equation has no solution, and $v(\epsilon)$ increases toward its supremum $\frac{\varphi}{r} = 0.5$ as $\epsilon \to \infty$, shown in the right panel of Figure \ref{fig:func_v}.
\end{itemize}

\begin{center}
    \begin{minipage}[b]{0.32\textwidth}
        \centering
        \includegraphics[width=\textwidth]{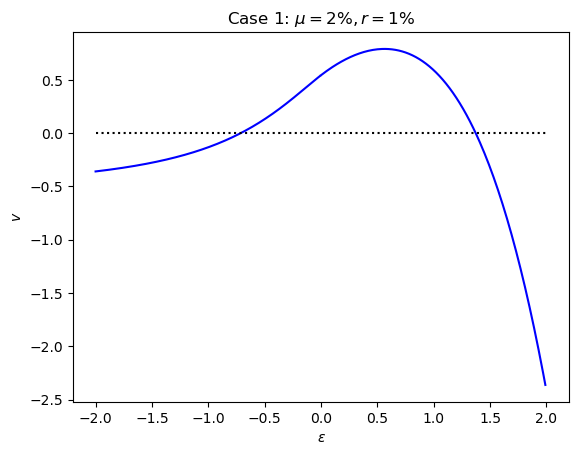}
    \end{minipage}
    \hfill
    \begin{minipage}[b]{0.32\textwidth}
        \centering
        \includegraphics[width=\textwidth]{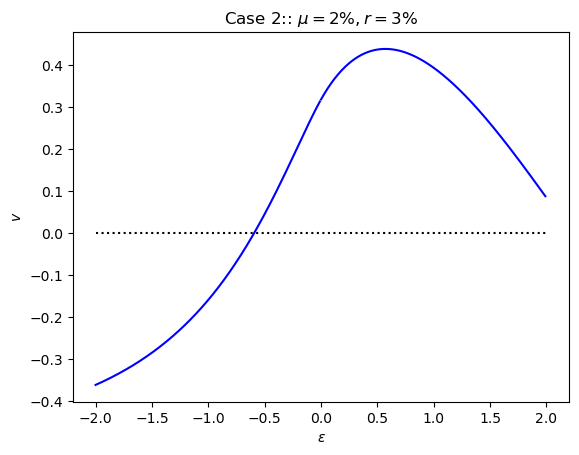}
    \end{minipage}
    \hfill
    \begin{minipage}[b]{0.32\textwidth}
        \centering
        \includegraphics[width=\textwidth]{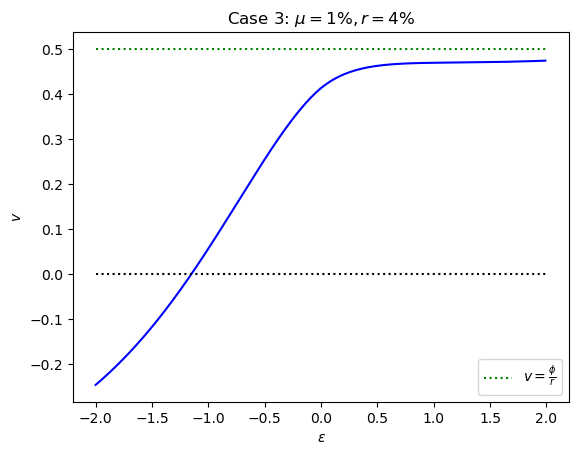}
    \end{minipage}
    \captionof{figure}{The expected P\&L $v(\epsilon)$ defined in \eqref{eqn:optimal_epsilon} as a function of the log-exit price $\epsilon$. \textbf{Left (Case 1):} A unique maximizer $\epsilon^* > 0$ exists because $\mu > r$. \textbf{Middle (Case 2):} A unique interior maximizer exists despite $\mu < r$, due to the interplay of fee income and last-passage statistics. \textbf{Right (Case 3):} $v(\epsilon)$ increases monotonically toward the supremum $\frac{\varphi}{r} = 0.5$ as $\epsilon \to \infty$, indicating an optimal strategy of never withdrawing.}
    \label{fig:func_v}
\end{center}
\end{example}

\section{Data and Numerical Implementation} \label{sec:Numerical}

In this section, we present empirical results on the fine structure of implied volatility for Uniswap V3 liquidity profiles, utilizing the framework developed in Section \ref{sec:Pricing}. The central advancement of this framework is the impermanent loss (IL) replication formula \eqref{eqn:IL_replication}, which characterizes IL as a weighted strip of vanilla options. For a liquidity profile $L(q) = \ell(q)/(2q^{3/2})$, we can assign an implied volatility to any liquidity provision by equating the option-implied IL price to its corresponding model price under Black-Scholes or Bachelier dynamics.

A primary advantage of this canonical parametrization is that it enables a direct economic comparison across disparate liquidity pools. Specifically, pools with varying fee tiers or liquidity concentrations can be evaluated on a unified basis of implied volatility. In the following analysis, we compare Black-Scholes and Bachelier implied volatilities across multiple expiries and fee tiers using market data.

\subsection{Data Sources and Collection}
This subsection details the empirical data acquisition process required to synthesize the market-side IL integral and reconstruct the corresponding liquidity profiles.

\subsubsection{Option Data}
European option quotes for ETH are obtained from the Deribit exchange through its public API. For each hourly snapshot, the following data points are collected:
\begin{itemize}
    \item \textit{Bid-ask mid prices}: Prices for both call and put options are gathered across the full spectrum of available strikes.
    \item \textit{Quarterly expiries}: The dataset includes four specific expiries: December 2025, March 2026, June 2026, and September 2026.
    \item \textit{Forward Price and Maturity}: We collect the underlying reference forward price $F$ (Deribit \texttt{underlying\_price}) and the time to maturity $T$ in years for each respective expiry.
\end{itemize}
Let $P_0$ denote the Uniswap pool spot price at the snapshot, and let $F$ denote the Deribit maturity-$T$ reference forward used for the option surface. In our pipeline, $P_0$ is used only to implement the IL replication split in \eqref{eq:IL_price} (i.e., the boundary between the put and call legs), while $F$ is used for put--call parity, model option prices, and the log-moneyness coordinate $x=\log(K/F)$. Consequently, the IL split point $K=P_0$ corresponds to $x=\log(P_0/F)$ on the plots.

\subsubsection{Liquidity Data}
On-chain liquidity profiles are extracted directly from the Uniswap V3 ETH/USDC pool contracts. For each snapshot, the following parameters are queried:
\begin{itemize}
    \item \textit{Tick Data}: All initialized ticks and their corresponding signed liquidity deltas $\Delta \ell_i$.
    \item \textit{Pool State}: The pool's current tick, reported in-range liquidity $\ell_{\mathrm{curr}}$, and the pool price $P_0$.
\end{itemize}
The intrinsic liquidity $\ell(q)$—representing active in-range liquidity at price $q$—is reconstructed by aggregating the signed tick deltas. Here, $\Delta \ell_i$ corresponds to the pool's \texttt{liquidityNet}, representing the change in active liquidity as the price crosses tick $t_i$ upward \cite{uniswap2023math}. To fix the additive constant of the cumulative sum, the profile is anchored to the reported in-range liquidity $\ell_{\mathrm{curr}}$ at the current tick $t_0$, setting $\ell(P_0^-) = \ell_{\mathrm{curr}}$.

The analysis focuses on two ETH/USDC pools with fee tiers of 5 basis points (5bp) and 30 basis points (30bp). These pools allow for a contrast between tight liquidity concentration (5bp) and broader spatial distributions (30bp). The main-text empirical results use the November 17, 2025 snapshot, while additional weekly snapshots are reported in Appendix \ref{sec:appendix-snapshots}.

\subsection{Data Cleaning and Processing}

\subsubsection{Option Data Cleaning}
To ensure the integrity of the market-side IL integrals, raw option quotes from Deribit are subjected to standard no-arbitrage filters as recommended by CBOE \cite{cboe2019options}. The following constraints are applied:
\begin{itemize}
    \item \textit{Positive Prices}: All quotes with non-positive mid prices are removed.
    \item \textit{Monotonicity}: Call prices must be non-increasing with respect to the strike, while put prices must be non-decreasing.
    \item \textit{Convexity}: To prevent butterfly arbitrage, the second difference in prices relative to the strike must be non-negative (i.e., $C(K_{i-1}) - 2C(K_i) + C(K_{i+1}) \geq 0$).
\end{itemize}
In our analyzed snapshots, OTM quotes passed these filters after positive-price filtering. ITM coverage gaps, where arbitrage violations are more common, are addressed implicitly via put–call parity.

\subsubsection{Interpolation and Synthetic Pricing}
The integration variable is identified with the strike $K$ (measured in USD per ETH), which aligns with the price axis $q$ used in the liquidity profile. Between observed strikes, option prices are linearly interpolated. For strikes $K \in [K_i, K_{i+1}]$ with observed prices $C_i$ and $C_{i+1}$, the call price is given by:
$$
C(K) = C_i + \frac{C_{i+1} - C_i}{K_{i+1} - K_i}(K - K_i).
$$
This piecewise linear representation, when combined with the piecewise constant liquidity profile derived from the Uniswap tick structure, allows for the use of closed-form antiderivatives for the IL integrals. Specifically, on any interval $[a, b]$ where $\ell$ is constant, thus $L(q) = \ell / (2q^{3/2})$, and the option price $O(q) = a_0 + a_1 q$ is linear:
$$
\int_a^b L(q) O(q) , dq = \frac{\ell}{2} \left[ -\frac{2a_0}{\sqrt{q}} + 2a_1 \sqrt{q} \right]_a^b.
$$
This analytic approach completely sidesteps numerical quadrature noise. The piecewise-linear interpolant is the pointwise \textit{maximal} arbitrage-free fill-in between observed strikes \cite{cohen2020detecting}. Since $L(q)$ is positive, this upper bound on the market price translates into a conservative benchmark for the subsequent implied-volatility inversion.

Higher-order spline interpolation is avoided for two reasons \cite{lefloch2020arbitrage}:
\begin{itemize}
    \item \textit{Static Arbitrage}: Splines are not necessarily shape-preserving and can violate the convexity constraints required to exclude butterfly arbitrage \cite{wystup2017arbitrage}.
    \item \textit{Numerical Ringing}: Cubic splines may overshoot between knots, and these artifacts can be amplified into spurious high-frequency oscillations in the implied volatility curve during the nonlinear inversion process.
\end{itemize}

\paragraph{\textit{Synthetic option pricing.}}
When strike coverage on one side of the market is sparse, missing prices are synthesized from the better-covered side using put--call parity. With $r=0$, this is $C(K)-P(K)=F-K$ (equivalently, $P=C-F+K$ and $C=P+F-K$). This approach guarantees continuous coverage across the integration domain while preserving no-arbitrage conditions, as put--call parity is an exact identity.

The threshold for switching to synthetic pricing is a local strike gap of 500 USD or more. This data-coverage heuristic reflects the transition from dense near-ATM strike spacing on Deribit (typically 25–50 USD) to sparse deep-OTM spacing (500 USD or more). We found this to be a stable cutoff across expiries in our snapshot. When gaps exceed this threshold, linear interpolation becomes unreliable, and we instead rely on the synthetic prices derived from the more liquid side of the market.

\paragraph{\textit{Market Proxy vs. Model Prices.}}
Our empirical pipeline utilizes two distinct objects to facilitate the implied-volatility inversion:
\begin{itemize}
    \item \textit{Market Proxy} ($\widehat O^{\rm mkt}(K)$): This is constructed from discrete market option quotes via no-arbitrage cleaning and piecewise-linear interpolation. It is used exclusively to evaluate the market-side IL integrals. We maintain this evaluation as analytically as possible by using closed-form antiderivatives to avoid injecting numerical quadrature noise into the subsequent inversion process.
    \item \textit{Model Option Prices} ($O^{\rm BS}(K;\sigma)$ and $O^{\rm B}(K;\sigma)$): These are the theoretical prices used to define Black–Scholes and Bachelier implied volatilities. Inversion is performed by matching model-side and market-side IL integrals on each integration bin at the chosen resolution.
\end{itemize}
By partitioning the integration domain at all Uniswap V3 tick boundaries, strike knots, and the split point $P_0$, we ensure that the intrinsic liquidity $\ell$ remains constant and the market proxy remains affine on each segment. This structure allows for an exact closed-form evaluation of the market-side integral, providing a robust and noise-free benchmark for determining the fine structure of implied volatility.

\subsubsection{Integration Partitioning}
To maintain analytic precision, the integration domain is partitioned at every point where the integrand's functional form changes. These boundaries include all Uniswap V3 tick limits where the intrinsic liquidity $\ell$ shifts, all strike knots where the market proxy $\widehat{O}^{\text{mkt}}(K)$ undergoes a change in slope, and the spot price $P_0$, which serves as the put–call split point. Because liquidity $\ell$ remains constant and the market proxy remains affine on each resulting segment, the market-side integral admits an exact closed-form evaluation.

\subsubsection{Numerical Integration Details}
The IL integrals, characterized by the form 
$$
\int_a^b \frac{O(K)}{K^{3/2}} \, dK,
$$
are evaluated using strategies tailored to the specific model dynamics. For market prices and the Black–Scholes model ($r=0$), we utilize piecewise linear interpolation and closed-form antiderivatives for
$$
\int_a^b \frac{C_{\mathrm{BS}}(K;\sigma)}{K^{3/2}} \, dK
$$
to avoid injecting numerical quadrature noise into the evaluation. In the Bachelier model, where the auxiliary term involving $\Phi(d(K))/\sqrt{K}$ lacks an elementary antiderivative, we employ a hybrid approach: exact boundary terms are combined with 32-point Gauss–Legendre quadrature on the smooth remainder. This semi-closed-form method ensures machine-precision accuracy across each segment; see Appendix~\ref{sec:bachelier-quadrature} for the derivation and numerical validation.

\subsubsection{Interest Rate Convention}
In accordance with Deribit exchange conventions, the risk-free rate is set to $r = 0$ throughout the implementation. This is consistent with the theoretical assumption of equal interest rates for both tokens in the pool in Section \ref{sec:Pricing}. While the empirical forward price $F$ is used for model pricing and put–call parity identities, the pool price $P_0$ is used exclusively as the boundary to split the integral between put and call terms in \eqref{eq:IL_price}.

\subsection{Implied Volatility Computation (Multi-Resolution)}
Let $x := \log(K/F)$ denote log-moneyness (identifying the integration variable $q$ with strike $K$). For a given resolution $n$, we define a partition
$x_0 < x_1 < \cdots < x_n$ and corresponding bins
$B_{n,k} := \{ K : x(K) \in [x_{k-1}, x_k)\}$, $k=1,\dots,n$.
In practice, the partition is taken as a coarsening of the finest initialized-tick partition, obtained by aggregating adjacent tick buckets into $n$ bins.
We then restrict the liquidity profile $L$ to each bin via
\[
L_{n,k}(K) := L(K)\1_{\{K\in B_{n,k}\}}.
\]

For each bin we compute Black--Scholes and Bachelier implied volatilities
$\sigma^{\rm BS}_{n,k}$ and $\sigma^{\rm B}_{n,k}$ by matching model and market IL prices:
\[
\int L_{n,k}(K)\,\widehat O^{\rm mkt}(K)\,dK
=
\int L_{n,k}(K)\,O^{\rm BS}(K;\sigma^{\rm BS}_{n,k})\,dK,
\]
and the same for Bachelier.  By Proposition \ref{prop:iv_existence}, each equation admits a unique solution because the combined put+call map is strictly increasing in $\sigma$. We solve by bisection and do not invert put and call components separately.

We report multiple resolutions $n \in \{1,3,6,12,N\}$, where $N$ is the finest initialized-tick partition. The Black--Scholes IV is reported as an annualized volatility level, and the Bachelier IV is normalized as
\[
\bar\sigma_{\rm B}:=\sigma_{\rm B}/P_0,
\]
matching the scale used in the multi-resolution overlay figures.

\subsection{Empirical Results: Multi-Resolution Implied Volatility}
The main-text empirical results use the November 17, 2025 snapshot for both 5bp and 30bp ETH/USDC pools. Additional weekly snapshots are reported in the Appendix \ref{sec:appendix-snapshots} as robustness checks.

\subsubsection{Liquidity Concentration and IL Contribution}
Figures \ref{fig:il_30bps} and \ref{fig:il_5bps} report IL contribution for a representative expiry (March 2026) in each fee tier. In this snapshot, the 5bp pool concentrates weight more tightly near $P_0$, while the 30bp pool distributes weight more broadly.

\begin{center}
    \includegraphics[width=0.95\textwidth]{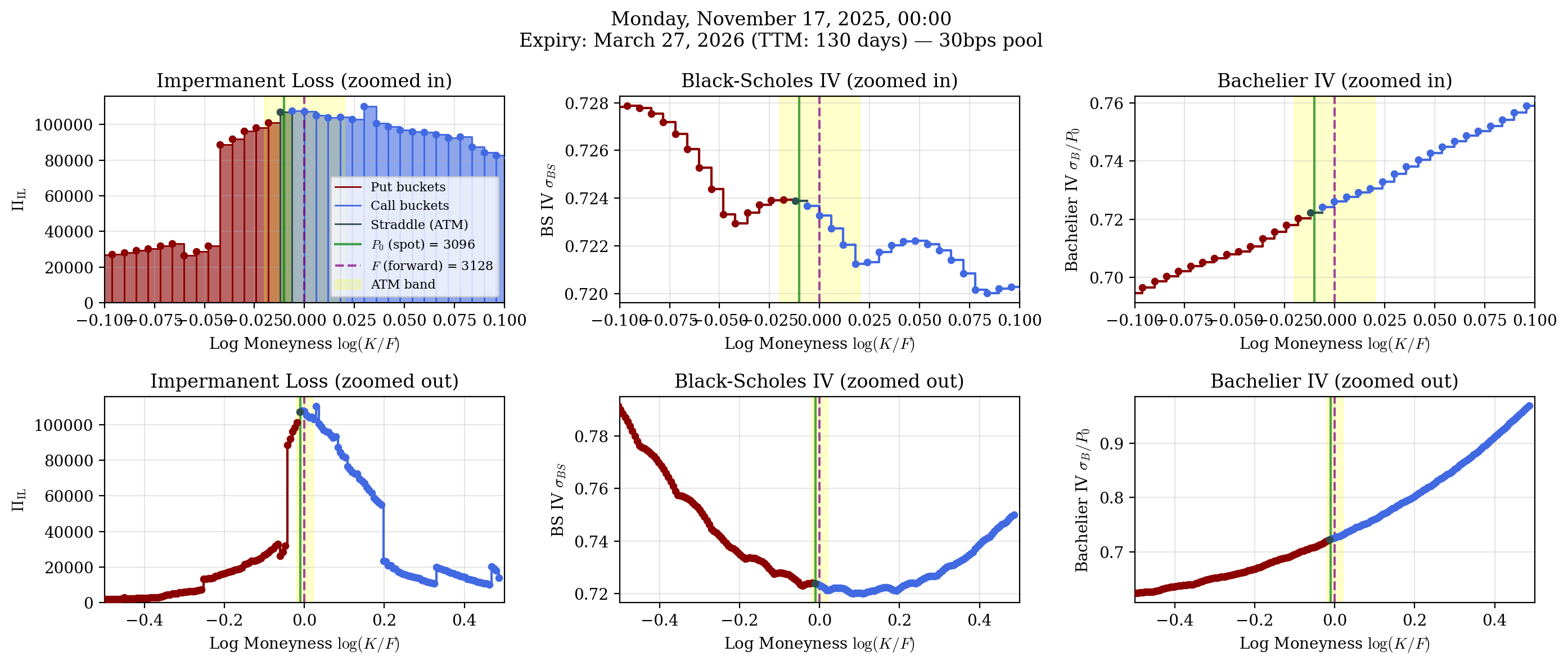}
    \captionof{figure}{IL contribution for the 30bp ETH/USDC pool (Nov 17, 2025 snapshot; representative expiry). The left column reports IL contribution; the remaining columns show the corresponding single-resolution IV outputs from the shared plotting routine.}
    \label{fig:il_30bps}
\end{center}

\begin{center}
    \includegraphics[width=0.95\textwidth]{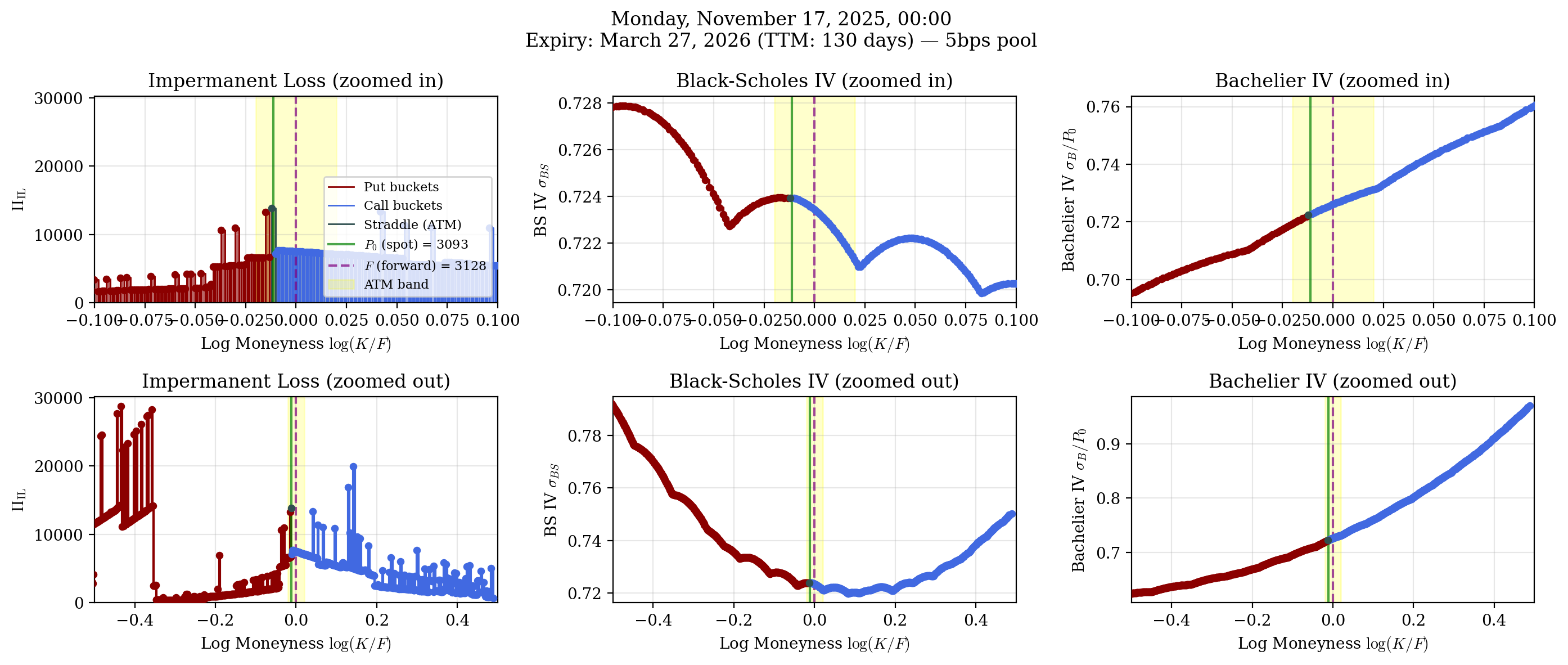}
    \captionof{figure}{IL contribution for the 5bp ETH/USDC pool (Nov 17, 2025 snapshot; representative expiry). The left column reports IL contribution; the remaining columns show the corresponding single-resolution IV outputs from the shared plotting routine.}
    \label{fig:il_5bps}
\end{center}

Having established how each fee tier loads the IL strip, we now turn to the implied-volatility surfaces obtained by equating IL prices across models.

Figures \ref{fig:multires_30bps} and \ref{fig:multires_5bps} overlay implied-volatility curves across resolutions $n\in\{1,3,6,12,N\}$ for four expiries (columns). The top row shows Black--Scholes IV and the bottom row shows normalized Bachelier IV $\bar\sigma_{\rm B}=\sigma_{\rm B}/P_0$.

\begin{center}
    \includegraphics[width=0.98\textwidth]{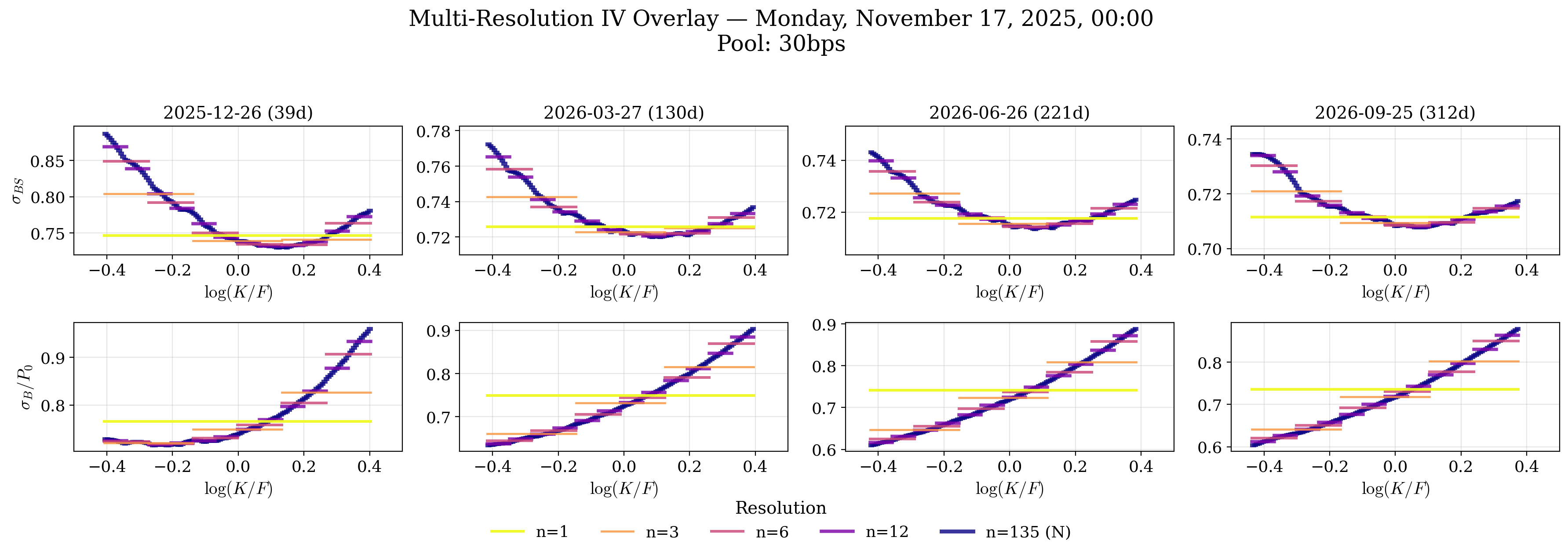}
    \captionof{figure}{Multi-resolution implied-volatility overlay for the 30bp ETH/USDC pool (Nov 17, 2025 snapshot). Curves correspond to $n\in\{1,3,6,12,N\}$, with $N$ the finest initialized-tick partition. Columns are quarterly expiries; rows are Black--Scholes IV (top) and normalized Bachelier IV $\bar\sigma_{\rm B}=\sigma_{\rm B}/P_0$ (bottom).}
    \label{fig:multires_30bps}
\end{center}

\begin{center}
    \includegraphics[width=0.98\textwidth]{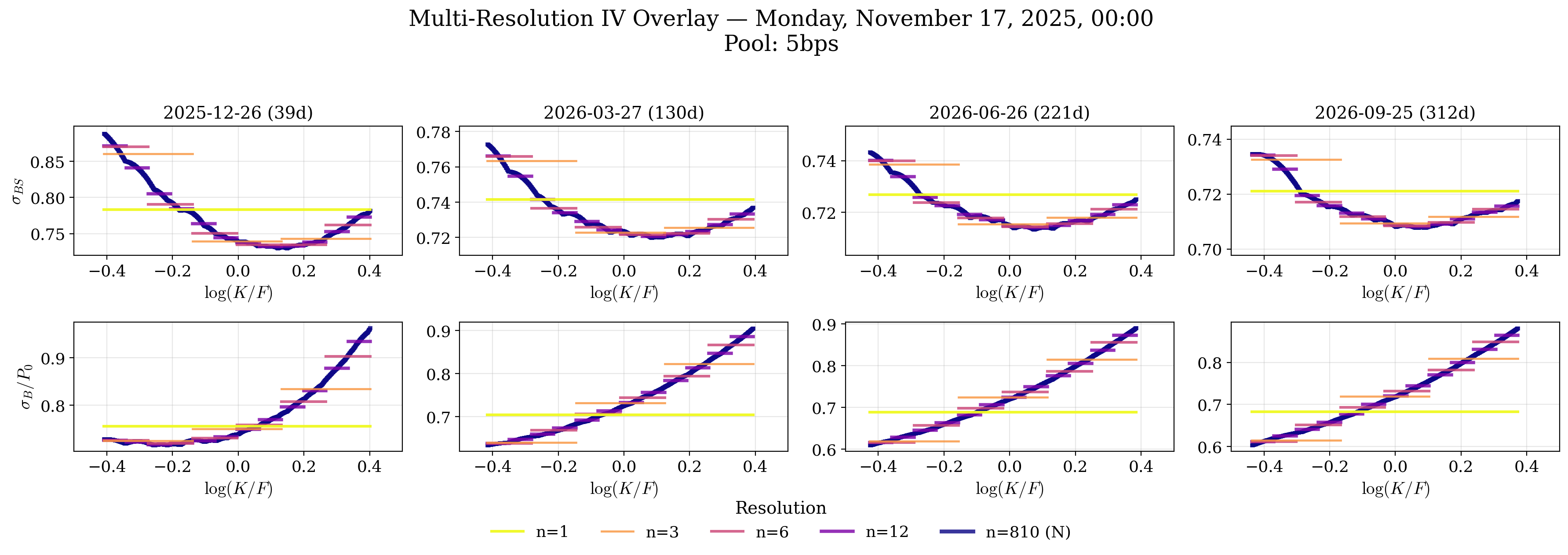}
    \captionof{figure}{Multi-resolution implied-volatility overlay for the 5bp ETH/USDC pool (Nov 17, 2025 snapshot). Fine-scale oscillations are most visible at $n=N$, while coarser aggregations (e.g., $n=6$ and $n=12$) recover a stable macro smile/skew structure.}
    \label{fig:multires_5bps}
\end{center}

The overlays support four robust empirical findings:
\begin{itemize}
    \item \textit{Scale separation}: the macro smile and skew are already visible at coarse resolutions and remain qualitatively stable as $n$ increases.
    \item \textit{Microstructure concentration}: high-frequency wiggles are concentrated at the finest resolution, consistent with small-bin conditioning effects rather than large economic regime shifts.
    \item \textit{Term structure}: near-dated expiries show stronger curvature, while longer-dated expiries are flatter in both Black--Scholes and Bachelier representations.
    \item \textit{Fee-tier comparison}: the 5bp pool exhibits finer native structure at $n=N$, but at matched coarse resolution (e.g., $n=12$) the 5bp and 30bp smiles are closely aligned. This coincides with tighter IL concentration in Figures \ref{fig:il_30bps}--\ref{fig:il_5bps}, which can make small-bin inversions more sensitive.
\end{itemize}

\subsection{Summary Statistics}
Under the multi-resolution framing, extreme min/max values at the native finest partition are sensitive to microstructure and are not the preferred primary summary for economic interpretation. We therefore emphasize the cross-resolution overlays in Figures \ref{fig:multires_30bps} and \ref{fig:multires_5bps}; additional weekly snapshots are provided in Appendix~\ref{sec:appendix-snapshots}.

\subsection{Key Takeaways}
The empirical analysis of multi-resolution implied volatility yields several significant conclusions regarding the pricing of liquidity provision in Uniswap V3:
\begin{enumerate}
    \item \textit{Persistence of the Volatility Smile}: The IL-implied volatility exhibits the characteristic "smile" shape prevalent in traditional options markets, with elevated IV for out-of-the-money strikes on both sides. This confirms that the option-based replication formula \eqref{eqn:IL_replication} effectively inherits the market's complex view on price dynamics and tail risk.
    \item \textit{Model Convergence near ATM}: Near the money, Black--Scholes IV and normalized Bachelier IV are closely aligned; divergence is primarily a wing effect.
    \item \textit{Term Structure Dynamics}: The overlays reveal systematic maturity smoothing: near-term expiries exhibit stronger smile curvature, while longer-term expiries are flatter.
    \item \textit{Decoupling of Fee Tiers and Economic Risk}: While the 5bp and 30bp pools have very different native bucket granularity, their coarse-resolution smiles remain close. This indicates fee tiers primarily reshape liquidity placement rather than the market-implied level of economic risk.
    \item \textit{Unified Cross-Pool Comparison}: The canonical parametrization using $(P, \ell)$ enables direct comparison across fee tiers and resolutions. The persistence of similar coarse-scale IV levels across pools suggests LP risk is largely a market-level property.
    \item \textit{Level}: The inferred Black--Scholes IV levels are consistent with the Deribit option surface used as input in our snapshots.
\end{enumerate}

\section{Conclusion and Discussion}
We have demonstrated in this article that the canonical parametrization of the bonding curve in a CFMM, as established in \eqref{eq:reserves}, provides a robust and convenient framework for addressing several key problems in decentralized finance. Specifically, this formulation enables:
\begin{itemize}
    \item \textit{Risk-Neutral Pricing and Hedging}: The derivation of fair values and sensitivities (Delta, Gamma, Vega) for both pool value and impermanent loss.
    \item \textit{Implied Volatility Fine Structure}: The assignment of model-implied volatilities to discrete liquidity segments, allowing for a granular assessment of risk across the price spectrum.
    \item \textit{Loss-Versus-Rebalancing (LVR)}: A generalized understanding of the non-hedgeable costs of liquidity provision as a time-integral of quadratic variation weighted by the liquidity profile.
    \item \textit{Path-Dependent Analysis}: The pricing of impermanent loss using the last-passage time theory, which accounts for the "reset" property of CFMM positions.
\end{itemize}

Numerical examples utilizing market data from Uniswap V3 ETH/USDC pools and Deribit option markets validate these methodologies. Our empirical results confirm a volatility smile consistent with crypto-asset dynamics and demonstrate that while fee tiers alter the structural concentration of liquidity, the underlying economic risk remains consistent across the market.

Throughout this work, except for the numerical analysis, liquidity profiles have been treated as static. In practice, while the intrinsic liquidity profile $\ell$ is observable on-chain, its evolution is dynamic, driven by block-level validation and LP activity. This observation motivates the shift toward statistical and dynamical modeling of liquidity surfaces. Preliminary empirical analyses of the temporal evolution of these profiles can be found in \cite{risk2025dynamics}.

Furthermore, the parametrization in \eqref{eq:reserves} provides a foundation for developing market models that are fully consistent with the CFMM mechanism. By accounting for the dual nature of pool updates—where trades move the state along the bonding curve while LP additions or withdrawals shift the curve itself—the evolution of the pool state $(x_t, y_t, P_t)$ can be modeled as a dynamical system:
\begin{align*} 
dx_t &= \int_{P_t}^\infty dL_t(q) dq + d\Lambda^a_t - d\Lambda^b_t, \notag\\
dy_t &= \int_0^{P_t} q dL_t(q) dq - P_t \left( d\Lambda^a_t - d\Lambda^b_t \right), \\
dP_t &= -\frac1{L_t(P_t)} \left( d\Lambda^a_t - d\Lambda^b_t \right),\notag \\
dL_t(q) &= \left\{ \mathcal{A} \, L_t(q) + h(t, q) \right\} dt,
\end{align*}
where $\Lambda^i_t$ represents cumulative trade volumes and the operator $\mathcal{A}$ describes the diffusive properties of liquidity over time. We refer the reader to \cite{lee2026dynamics} for a stochastic treatment of the dynamic for $L$. These equations capture the complex interplay between order flow, liquidity provision, and price formation. The comprehensive analysis of this system, particularly under stochastic noise, remains an area for future research.

\section*{Acknowledgements}
S. N. T. is grateful for the financial support from the National Science and Technology Council of Taiwan under grant 114-2115-M-007-012-MY3, "Mathematical Foundation of Automated Market Makers."

The authors used Google Gemini 3 (Flash version, March 2, 2026) to polish the manuscript's grammar and style. All original theorems, proofs, and analyses were developed solely by the authors, who assume full responsibility for the integrity of the work.
\appendix

\section{Note on Interest Rates} \label{sec:interest-rates} 
This section details the interest rate conventions and forward price mechanics utilized in the pricing models, specifically relating them to standard Black-Scholes dynamics. We define the relevant rates as follows:
\begin{itemize}
    \item $r$: The risk-free rate of the numeraire (e.g., USD).
    \item $\delta$: The yield or staking reward of the underlying asset (e.g., ETH). This is functionally equivalent to a foreign short rate in foreign exchange (FX) market models.
\end{itemize}
The risk-neutral drift for the underlying asset under the numeraire measure is determined by the differential $r - \delta$. Consequently, the forward price $F_{0,T}$ at time $T$ is defined by:
$$
F_{0,T} = P_0 e^{(r-\delta)T}.
$$
Under these definitions, the standard put-call parity identity holds, mirroring the convention used in FX markets:
$$
C - P = e^{-rT}(F_{0,T} - K) = P_0 e^{-\delta T} - K e^{-rT}.
$$

In the numerical implementation discussed in Section \ref{sec:Numerical}, the risk-free rate is set to $r = 0$ throughout, consistent with the assumption of equal interest rates for both tokens in the pool. Under this $r = \delta = 0$ assumption, the theoretical forward price $F$ equals the spot price $S_0$. Empirically, however, the Deribit forward price (\texttt{underlying\_price}) may differ slightly from the Uniswap pool price $P_0$; we utilize the forward $F$ for all pricing and parity calculations while maintaining $P_0$ as the put/call split point.

\section{Numerical Integration for Bachelier IL} \label{sec:bachelier-quadrature}
Both Black-Scholes and Bachelier models require the evaluation of integrals in the form:
$$
\int_a^b \frac{O(K)}{K^{3/2}} \, dK,
$$
where $O(K)$ represents an option price. Applying integration by parts with $f = O(K)$ and $g' = K^{-3/2}$ yields:
$$
\int_a^b \frac{O(K)}{K^{3/2}} \, dK = \left[-\frac{2O(K)}{\sqrt{K}}\right]_a^b + 2\int_a^b \frac{O'(K)}{\sqrt{K}} \, dK,
$$
where the derivative $O'(K)$ is $-\Phi(d)$ for call options and $1 - \Phi(d)$ for put options. Consequently, the computational problem reduces to evaluating the auxiliary integral $\int \frac{\Phi(d(K))}{\sqrt{K}}dK$.

\subsection{Model-Specific Analytic Properties}
The feasibility of a closed-form solution depends on the underlying price dynamics:
\begin{itemize}
    \item \textit{Black-Scholes (Closed Form)}: In the lognormal model, $$d_2(K) = \frac{\log(P_0/K) - \tfrac{1}{2}\sigma^2 T}{\sigma\sqrt{T}}$$ is proportional to $\log K$. By substituting $u = \sqrt{K}$, $$d_2(u^2) = \frac{\log P_0 - 2\log u}{\sigma\sqrt{T}} - \frac{\sigma\sqrt{T}}{2} = A - B\log u$$ becomes linear in $\log u$. The resulting integral takes the form $\int \Phi(\text{linear}) \times \exp(\text{linear})$, which admits an elementary antiderivative involving only $\Phi$ and $\exp$.
    \item \textit{Bachelier (Non-Elementary)}: In the normal model, $$d(K) = \frac{F - K}{\sigma\sqrt{T}}$$ is proportional to $K$. The substitution $u = \sqrt{K}$ results in $$d(u^2) = \frac{F - u^2}{\sigma\sqrt{T}} = A - Bu^2,$$ which is quadratic in $u$. The auxiliary integral $\int \Phi(\text{quadratic}) \, du$ has no elementary antiderivative and typically requires Owen’s $T$-functions or complex error functions for exact evaluation.
\end{itemize}

\subsection{Semi-Closed Form Solution}
To maintain high precision without the complexity of non-elementary functions, we compute the Bachelier integral using a semi-closed form:
$$
\int_a^b \frac{C_{\mathrm{B}}(K)}{K^{3/2}} \, dK = \underbrace{\left[-\frac{2C_{\mathrm{B}}(K)}{\sqrt{K}}\right]_a^b}_{\text{exact}} - 2 \underbrace{\int_a^b \frac{\Phi(d)}{\sqrt{K}} \, dK}_{\text{Gauss-Legendre}}
$$
The boundary terms are evaluated exactly. For the remaining integral, we employ $n$-point Gauss-Legendre quadrature on the interval $[a, b]$:
$$
\int_a^b f(K) \, dK \approx \frac{b-a}{2} \sum_{i=1}^{n} w_i \, f\!\left(\frac{b-a}{2}x_i + \frac{a+b}{2}\right),
$$
where $x_i$ and $w_i$ denote the Legendre nodes and weights on $[-1, 1]$.

\subsection{Error Control and Validation}
The integrand $\Phi(d(K))/\sqrt{K}$ is $C^\infty$ on any interval $[a, b] \subset (0, \infty)$, ensuring exponential convergence for Gauss-Legendre quadrature. Empirical testing demonstrates the following precision levels:
\begin{center}
\begin{tabular}{cc}
\hline
$n$ (Points) & Relative Error \\
\hline
8  & $6 \times 10^{-7}$ \\
16 & $2 \times 10^{-15}$ \\
32 & $< 10^{-15}$ \\
\hline
\end{tabular}
\vspace{6pt}
\captionof{table}{Convergence of relative error with increasing number of quadrature points.}
\label{tab:quadrature}
\end{center}
In our implementation, we utilize $n = 32$ points to consistently achieve machine precision with negligible computational overhead.

\section{Discretization of Integral Remainder}
This section provides a formal discretization of the remainder term $I(P_0, T)$, which is essential for practitioners implementing IL pricing within the piecewise-constant liquidity environment of Uniswap V3.

\begin{proposition} \label{prop:discretization}
Assume a constant risk-free rate $r$ and dividend rate $\delta$. The total impermanent loss value is given by:
$$
\Pi_{\text{IL}}(P_0) = \int_{0}^{\infty} L(q) C(q,T)dq + I(P_0, T), 
$$
where, defining $\alpha = \sqrt{1.0001}$ and $\ell_i$ as the liquidity present in the interval $[x_i, x_{i+1})$ with $x_i = (1.0001)^i = \alpha^{2i}$, the remainder $I(P_0, T)$ evaluates to:
\begin{align}
I(P_0, T) =& (\alpha-1)\sum_{i=i_{\min}}^{k-1} \ell_i\Big[ e^{-rT}\alpha^i - P_0 e^{-\delta T}\alpha^{-(i+1)} \Big] \notag \\
&+ \ell_k\Big[ e^{-rT}\big(\sqrt{P_0}-\alpha^k\big) + P_0 e^{-\delta T} \Big(\frac{1}{\sqrt{P_0}}-\alpha^{-k}\Big) \Big]. \label{eqn:remainder_sum}
\end{align}
Here, $i_{\min}$ is the first index with non-zero liquidity, and $k$ is the index such that $x_k \leq P_0 < x_{k+1}$.
\end{proposition}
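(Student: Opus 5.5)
The plan is to start from the risk-neutral IL price \eqref{eq:IL_price}, now with discounting, and use put--call parity to turn the put leg into calls plus a forward. Everything left over is a linear functional of $L$ over $(0,P_0]$. Because Uniswap V3 liquidity is piecewise constant on tick intervals, that functional can be integrated in closed form.

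\textbf{Step 1 (parity).} With constant $r$ and $\delta$, Appendix~\ref{sec:interest-rates} gives
\[
P(q,T)=C(q,T)-P_0e^{-\delta T}+qe^{-rT}.
\]
Substituting this into the put integral over $[0,P_0]$ of \eqref{eq:IL_price} and merging it with the call integral over $[P_0,\infty)$ gives
\[
\Pi_{\text{IL}}(P_0)=\int_0^\infty L(q)C(q,T)\,dq+I(P_0,T),\qquad
I(P_0,T)=\int_0^{P_0}L(q)\big(qe^{-rT}-P_0e^{-\delta T}\big)\,dq .
\]
The split is legitimate when both pieces are finite. I will assume $L$ vanishes below $x_{i_{\min}}$, so $\int_0^{P_0}L$ and $\int_0^{P_0}qL$ are finite; that is the role of $i_{\min}$. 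I also assume $L$ has enough decay for $\int L\,C$ to converge.

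\textbf{Step 2 (tick decomposition).} On $[x_i,x_{i+1})$ we have $L(q)=\ell_i/(2q^{3/2})$. The two elementary antiderivatives are
\[
\int_a^b qL\,dq=\ell_i\big(\sqrt b-\sqrt a\big),\qquad \int_a^b L\,dq=\ell_i\Big(\frac1{\sqrt a}-\frac1{\sqrt b}\Big).
\]
Split $[0,P_0]$ into the full ticks $i=i_{\min},\dots,k-1$ plus the partial tick $[x_k,P_0]$.

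For a full tick, $\sqrt{x_i}=\alpha^i$, so
\[
\sqrt{x_{i+1}}-\sqrt{x_i}=(\alpha-1)\alpha^i,\qquad \alpha^{-i}-\alpha^{-(i+1)}=(\alpha-1)\alpha^{-(i+1)}.
\]
Each full tick therefore contributes $(\alpha-1)\ell_i\big[e^{-rT}\alpha^i-P_0e^{-\delta T}\alpha^{-(i+1)}\big]$, which is the summand in \eqref{eqn:remainder_sum}.

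For the partial tick, the same formulas with $a=\alpha^{2k}$ and $b=P_0$ give
\[
\ell_k\Big[e^{-rT}\big(\sqrt{P_0}-\alpha^k\big)-P_0e^{-\delta T}\Big(\alpha^{-k}-\frac1{\sqrt{P_0}}\Big)\Big],
\]
which rearranges to the final term of \eqref{eqn:remainder_sum}.

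\textbf{Main obstacle.} The algebra is routine; the care is in the bookkeeping. First, the discounting conventions in the parity relation must be applied correctly. Second, the boundary cases need checking: when $P_0$ coincides with a tick $x_k$, the partial-tick term must vanish, and the interval convention $[x_i,x_{i+1})$ must be consistent with the index $k$ defined by $x_k\le P_0<x_{k+1}$. Third, I must confirm that $\alpha=\sqrt{1.0001}$ correctly converts tick indices into square-root prices.
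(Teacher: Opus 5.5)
Your proposal is correct and follows essentially the same route as the paper: put--call parity isolates $I(P_0,T)=\int_0^{P_0}L(q)\big(qe^{-rT}-P_0e^{-\delta T}\big)\,dq$, and the closed-form integral of $\ell_i/(2q^{3/2})$ is then summed over the full ticks $i=i_{\min},\dots,k-1$ plus the partial tick $[x_k,P_0]$, using $\sqrt{x_i}=\alpha^i$. Your remarks on integrability (vanishing of $L$ below $x_{i_{\min}}$, decay for $\int L\,C$) and on the case $P_0=x_k$ are sensible additions that the paper leaves implicit.
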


\begin{corollary}
If a unit of liquidity $\ell(q) = 1$ is deposited in the interval $[a, b)$ where $a = (1.0001)^i$ and $b = (1.0001)^j$ for some integers $i, j$, and $a \leq P_0 \leq b$, then the IL value simplifies to:
$$
\Pi_{\text{IL}}(P_0) = \int_a^b L(q) C(q,T)dq + e^{-rT} \big(\sqrt{P_0} - \sqrt{a}\big) + P_0 e^{-\delta t} \left(\frac{1}{\sqrt{P_0}} - \frac{1}{\sqrt{a}}\right).
$$
\end{corollary}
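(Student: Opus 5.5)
The plan is to specialize Proposition~\ref{prop:discretization} to the profile with $\ell_m = 1$ for the tick indices inside $[a,b)$ and $\ell_m = 0$ elsewhere, and then show that the remainder $I(P_0,T)$ telescopes.

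\textbf{Step 1: reduce to a finite range of ticks.} With $a = (1.0001)^i = \alpha^{2i}$ and $b = (1.0001)^j = \alpha^{2j}$, the nonzero liquidity indices are $m = i,\dots,j-1$, so $i_{\min} = i$. Since $a \le P_0 \le b$, the index $k$ with $x_k \le P_0 < x_{k+1}$ satisfies $i \le k \le j-1$ whenever $P_0 < b$. Because $L(q) = \ell(q)/(2q^{3/2})$ vanishes outside $[a,b)$, the term $\int_0^\infty L(q)C(q,T)\,dq$ reduces to $\int_a^b L(q)C(q,T)\,dq$. This is the first term of the claimed formula.

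\textbf{Step 2: evaluate the remainder.} Setting $\ell_m = 1$ in \eqref{eqn:remainder_sum}, the sum splits into two geometric sums that telescope:
\[
(\alpha-1)\sum_{m=i}^{k-1}\alpha^m = \alpha^k - \alpha^i = \alpha^k - \sqrt a,
\qquad
(\alpha-1)\sum_{m=i}^{k-1}\alpha^{-(m+1)} = \sum_{m=i}^{k-1}\bigl(\alpha^{-m}-\alpha^{-(m+1)}\bigr) = \alpha^{-i}-\alpha^{-k}.
\]
Adding the boundary term with $\ell_k = 1$, namely $e^{-rT}(\sqrt{P_0}-\alpha^k) + P_0e^{-\delta T}(1/\sqrt{P_0}-\alpha^{-k})$, the $\alpha^k$ contributions cancel in the $e^{-rT}$ part, leaving $e^{-rT}(\sqrt{P_0}-\sqrt a)$. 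The $\alpha^{-k}$ contributions cancel in the $P_0e^{-\delta T}$ part, leaving $P_0 e^{-\delta T}(1/\sqrt{P_0} - 1/\sqrt a)$, where we use $\alpha^{-i} = 1/\sqrt a$. This is exactly the stated expression, reading $e^{-\delta t}$ in the statement as $e^{-\delta T}$.

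\textbf{Step 3: the edge case $P_0 = b$.} Here $k = j$ and $\ell_k = 0$, so the boundary term drops out. The telescoped sum then runs over $m = i,\dots,j-1$ and yields $e^{-rT}(\sqrt b - \sqrt a) + P_0e^{-\delta T}(1/\sqrt b - 1/\sqrt a)$. This coincides with the formula because $\sqrt{P_0} = \sqrt b$.

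The only real care needed is the index bookkeeping: identifying $\alpha^i = \sqrt{a}$ and confirming the sign of the $\alpha^{-k}$ terms. There is no genuine obstacle.
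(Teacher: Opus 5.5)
Your proof is correct and follows the intended route. You specialize Proposition~\ref{prop:discretization} to $\ell_m=1$ for $i\le m\le j-1$, telescope the two geometric sums, and cancel the $\alpha^{\pm k}$ terms against the boundary term; the $P_0=b$ edge case and the $e^{-\delta t}$ typo are handled properly. Note that the telescoping only reassembles the tick-wise pieces from the proof of the proposition, so you could instead apply the closed form for $I_{[a,b]}$ directly on $[a,P_0]$ with $\ell\equiv1$, which needs neither index bookkeeping nor tick alignment of $a$ and $b$.
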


\subsection{C.1. Proof of Proposition}
Using put-call parity, the impermanent loss integral is split as follows:
$$
\Pi_{\text{IL}}(P_0) = \int_{0}^{\infty} L(q) C(q,T)dq + \underbrace{\int_{0}^{P_0} L(q)\big(qe^{-rT} - P_0 e^{-\delta T}\big)dq}_{=: I(P_0)}.
$$
Since the intrinsic liquidity $\ell(q)$ is piecewise constant on the intervals $[x_i, x_{i+1})$ defined by the Uniswap V3 tick structure, $I(P_0)$ can be expressed in closed form. For any sub-interval $(a,b) \subset [x_i, x_{i+1})$, the integral evaluates as:
\begin{align*}
I_{[a,b]}(P_0, T) &= \int_a^b \frac{\ell_i}{2q^{3/2}}\big(qe^{-rT} - P_0 e^{-\delta T}\big),dq \\
&= \ell_i\Big[ e^{-rT}\big(\sqrt{b}-\sqrt{a}\big) + P_0 e^{-\delta T}\Big(\frac{1}{\sqrt{b}} - \frac{1}{\sqrt{a}}\Big) \Big].
\end{align*}
By discretizing $I(P_0, T)$ as a sum over the active tick range:
\begin{align*}
I(P_0, T) &= \sum_{i=i_{\min}}^{k-1} I_{[x_i,x_{i+1}]}(P_0, T) + I_{[x_k, P_0]}(P_0, T) \\
&= \sum_{i=i_{\min}}^{k-1}
\ell_i\Big[
e^{-rT}\big(\sqrt{x_{i+1}}-\sqrt{x_i}\big)
+ P_0 e^{-\delta T}
\Big(\frac{1}{\sqrt{x_{i+1}}}-\frac{1}{\sqrt{x_i}}\Big)
\Big] \\
&\quad + \ell_k\Big[
e^{-rT}\big(\sqrt{P_0}-\sqrt{x_k}\big)
+ P_0 e^{-\delta T}
\Big(\frac{1}{\sqrt{P_0}}-\frac{1}{\sqrt{x_k}}\Big)
\Big].
\end{align*}
Substituting $\alpha = \sqrt{1.0001}$ such that $\sqrt{x_i} = \alpha^i$ and $1/\sqrt{x_i} = \alpha^{-i}$, we arrive at the final summation form in \eqref{eqn:remainder_sum}.

\section{Additional Snapshot Overlays} \label{sec:appendix-snapshots}

Main-text figures use the November 17, 2025 snapshot. For completeness, this appendix reports the same multi-resolution overlays for additional weekly snapshots.

\subsection*{30bp Pool}

\begin{center}
    \includegraphics[width=0.98\textwidth]{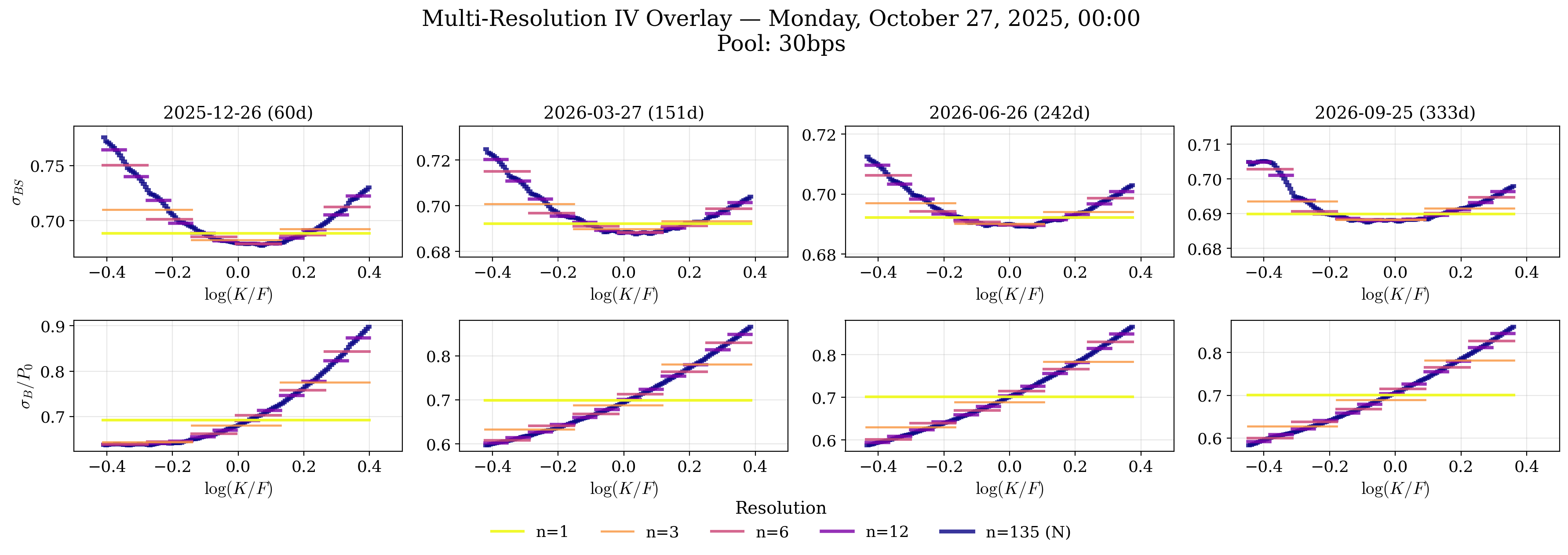}
    \captionof{figure}{Multi-resolution implied-volatility overlay for the 30bp pool (Oct 27, 2025 snapshot).}
\end{center}

\begin{center}
    \includegraphics[width=0.98\textwidth]{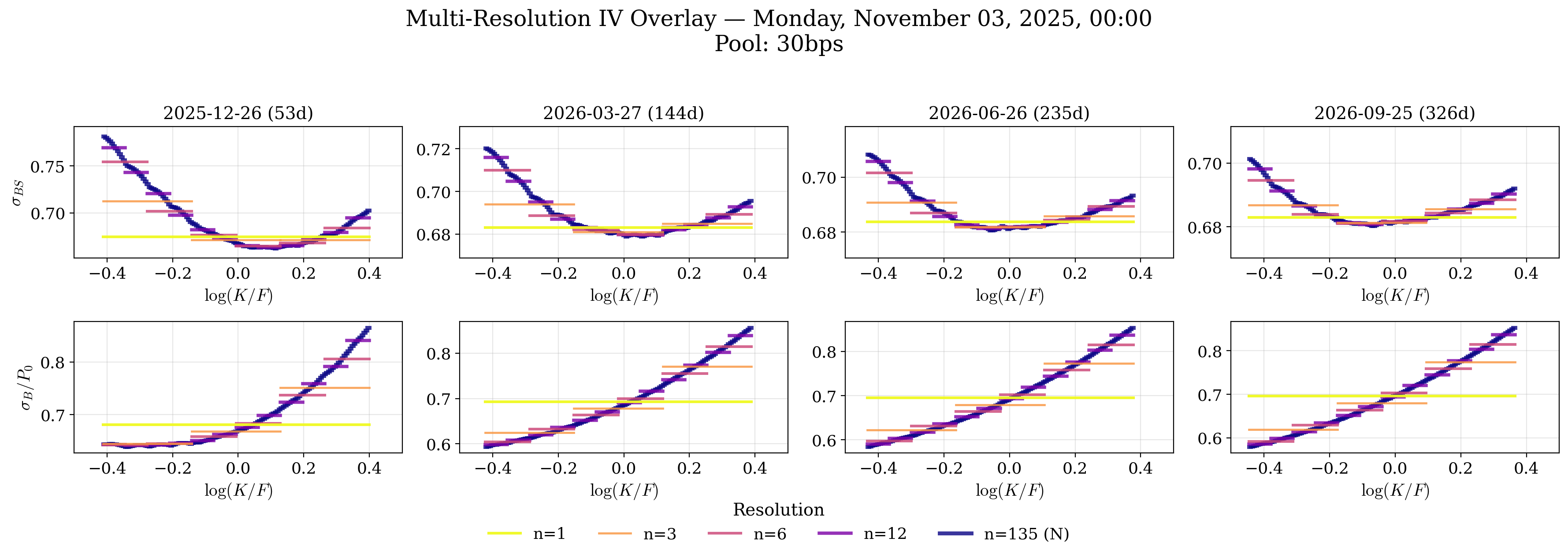}
    \captionof{figure}{Multi-resolution implied-volatility overlay for the 30bp pool (Nov 3, 2025 snapshot).}
\end{center}

\begin{center}
    \includegraphics[width=0.98\textwidth]{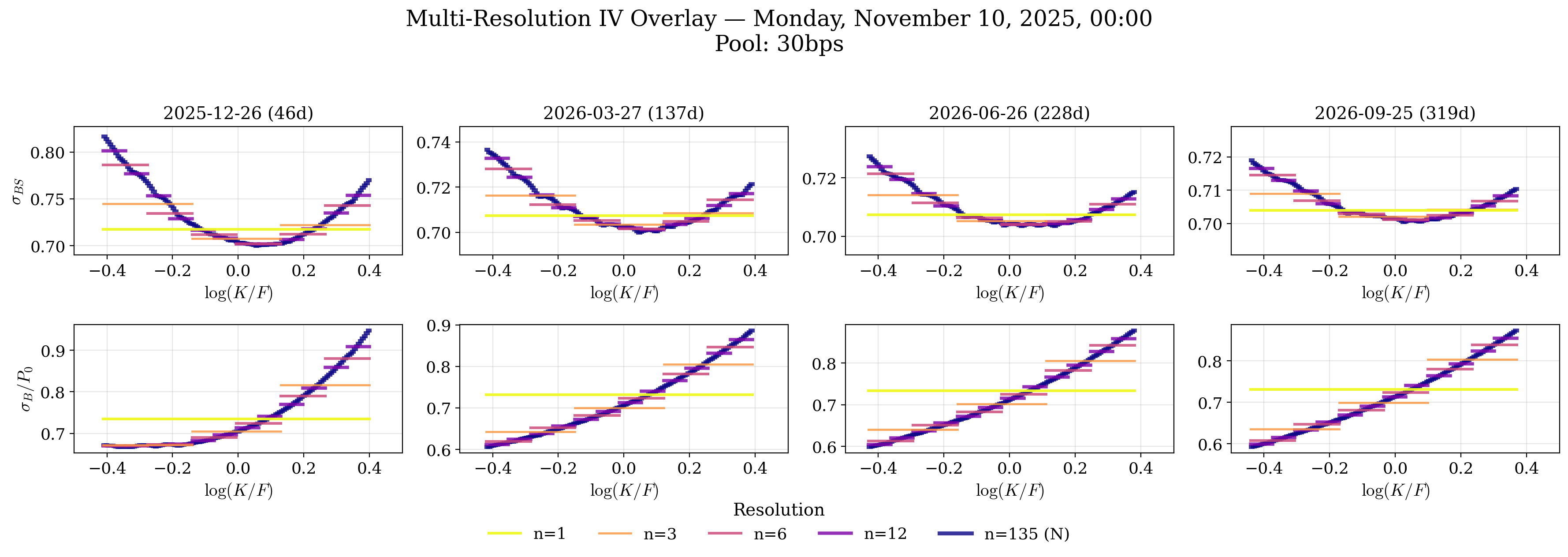}
    \captionof{figure}{Multi-resolution implied-volatility overlay for the 30bp pool (Nov 10, 2025 snapshot).}
\end{center}

\begin{center}
    \includegraphics[width=0.98\textwidth]{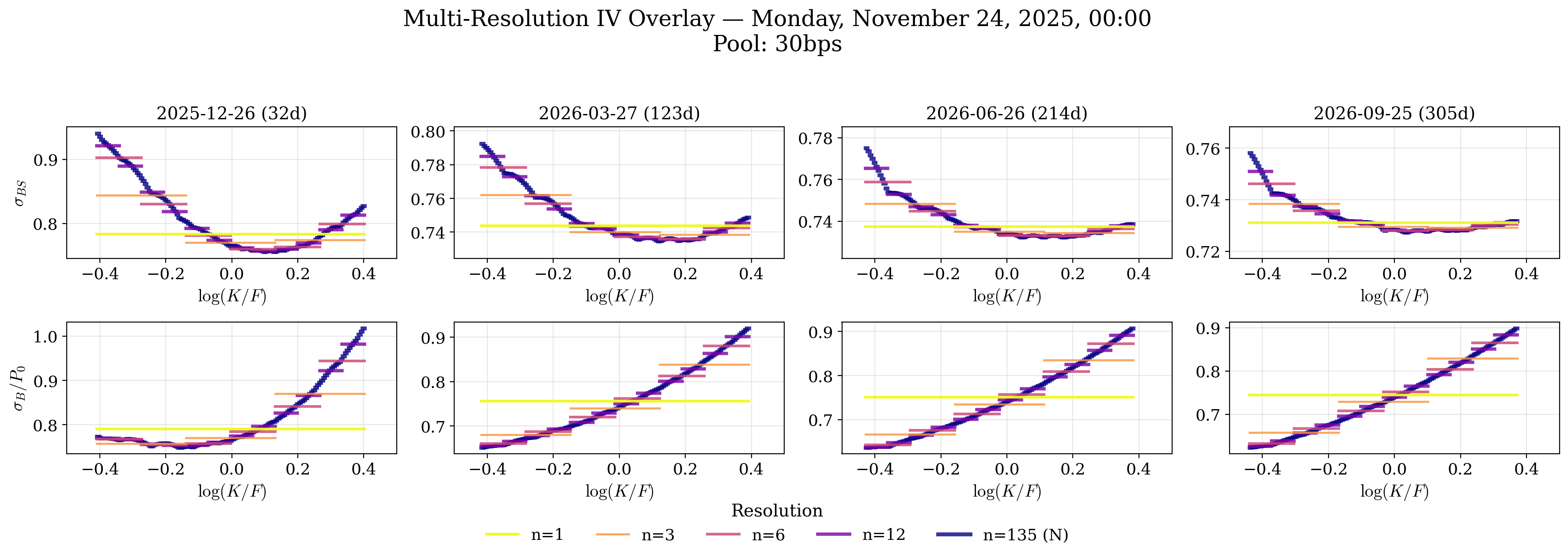}
    \captionof{figure}{Multi-resolution implied-volatility overlay for the 30bp pool (Nov 24, 2025 snapshot).}
\end{center}

\begin{center}
    \includegraphics[width=0.98\textwidth]{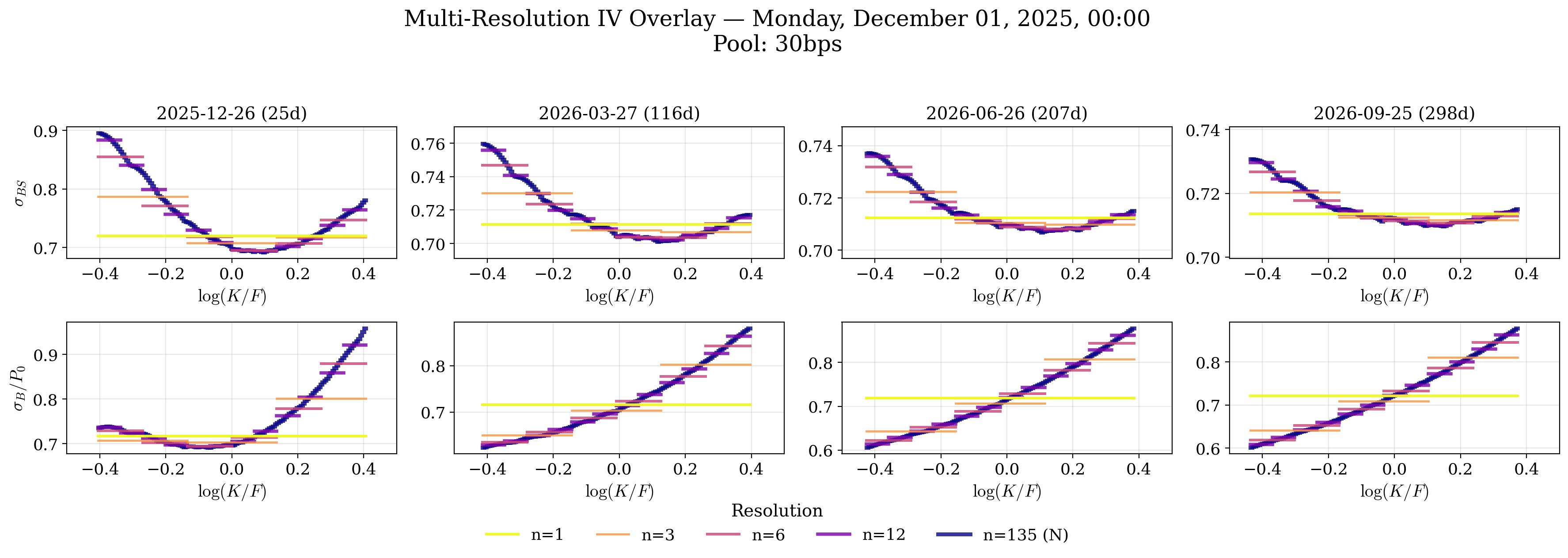}
    \captionof{figure}{Multi-resolution implied-volatility overlay for the 30bp pool (Dec 1, 2025 snapshot).}
\end{center}

\subsection*{5bp Pool}

\begin{center}
    \includegraphics[width=0.98\textwidth]{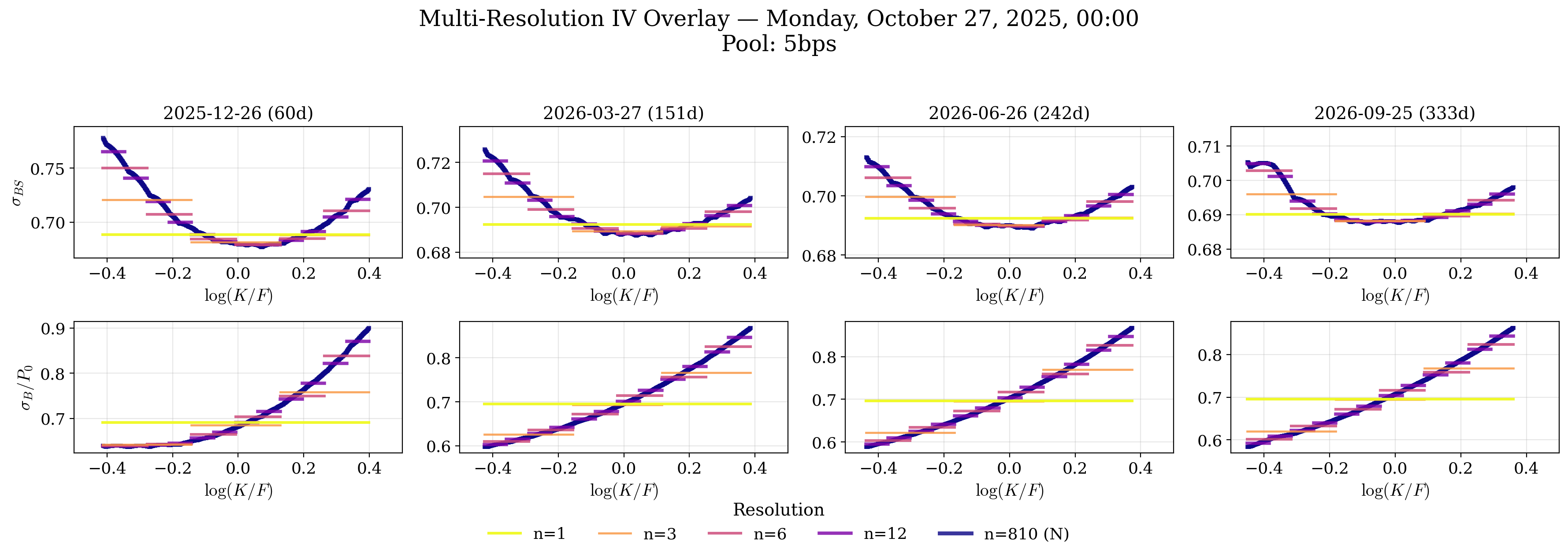}
    \captionof{figure}{Multi-resolution implied-volatility overlay for the 5bp pool (Oct 27, 2025 snapshot).}
\end{center}

\begin{center}
    \includegraphics[width=0.98\textwidth]{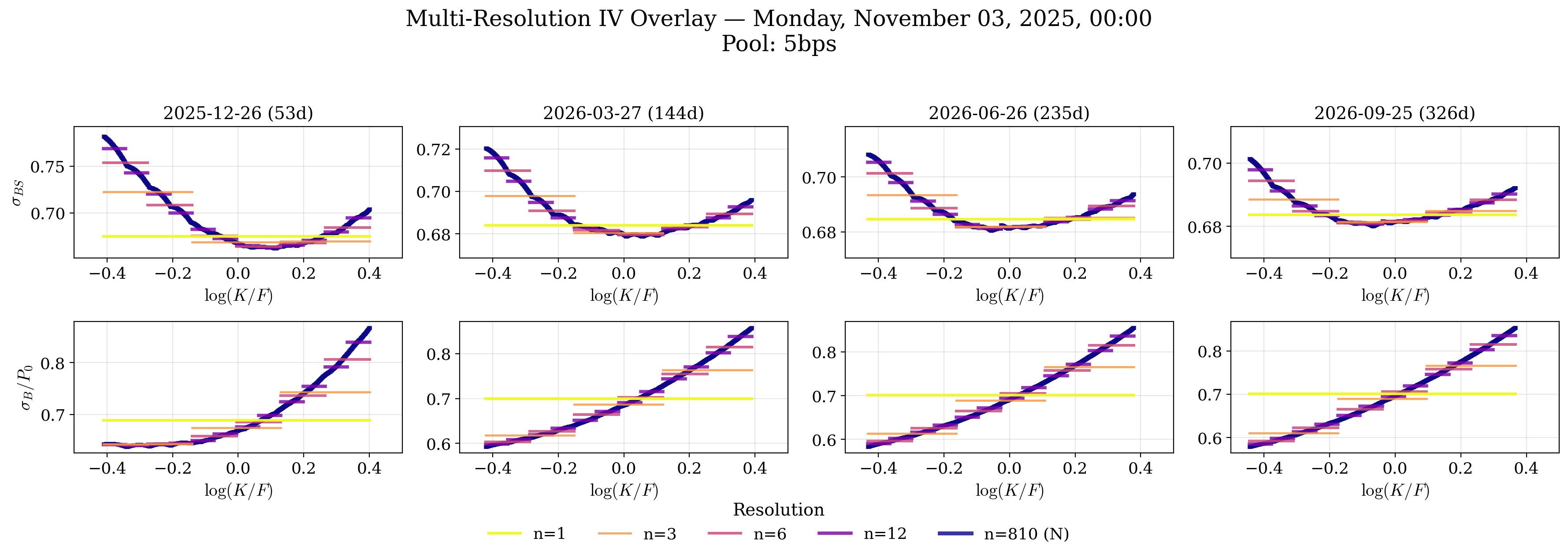}
    \captionof{figure}{Multi-resolution implied-volatility overlay for the 5bp pool (Nov 3, 2025 snapshot).}
\end{center}

\begin{center}
    \includegraphics[width=0.98\textwidth]{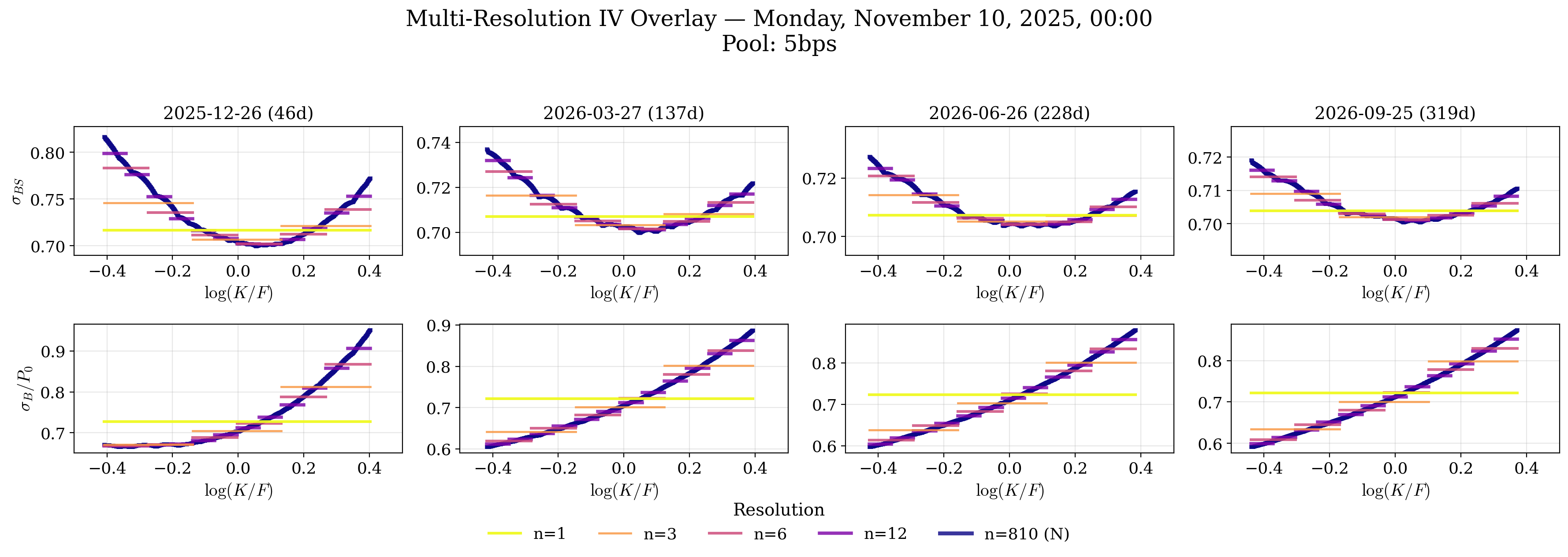}
    \captionof{figure}{Multi-resolution implied-volatility overlay for the 5bp pool (Nov 10, 2025 snapshot).}
\end{center}

\begin{center}
    \includegraphics[width=0.98\textwidth]{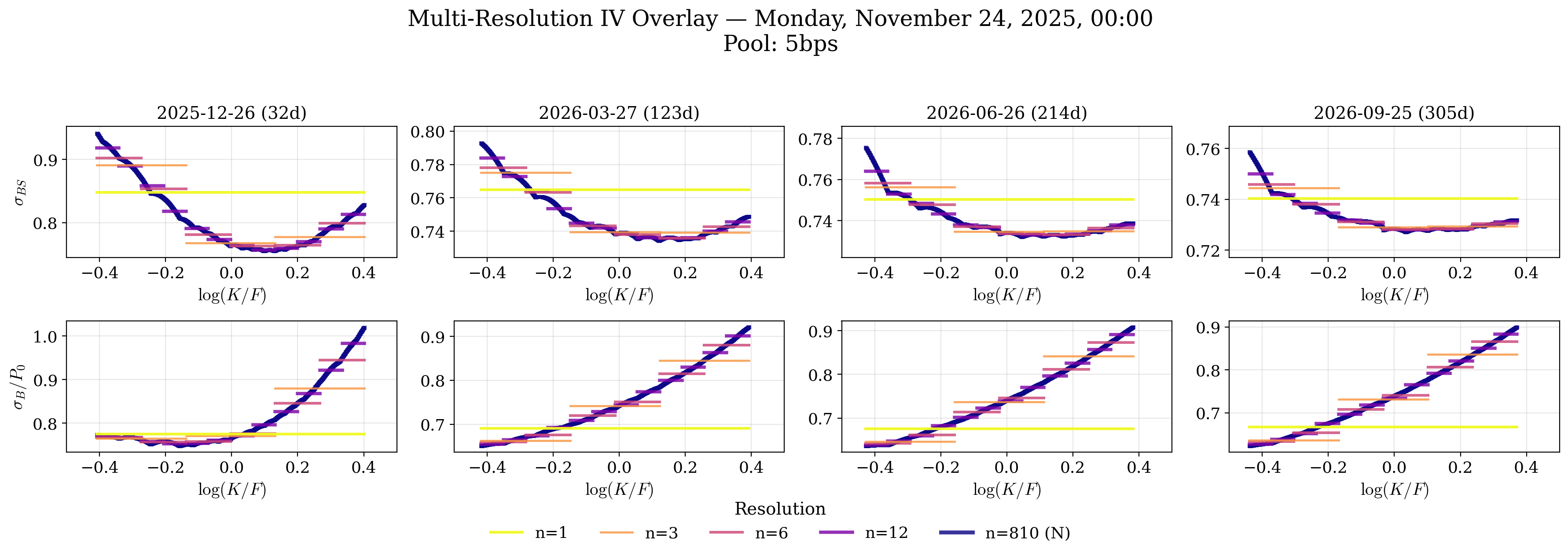}
    \captionof{figure}{Multi-resolution implied-volatility overlay for the 5bp pool (Nov 24, 2025 snapshot).}
\end{center}

\begin{center}
    \includegraphics[width=0.98\textwidth]{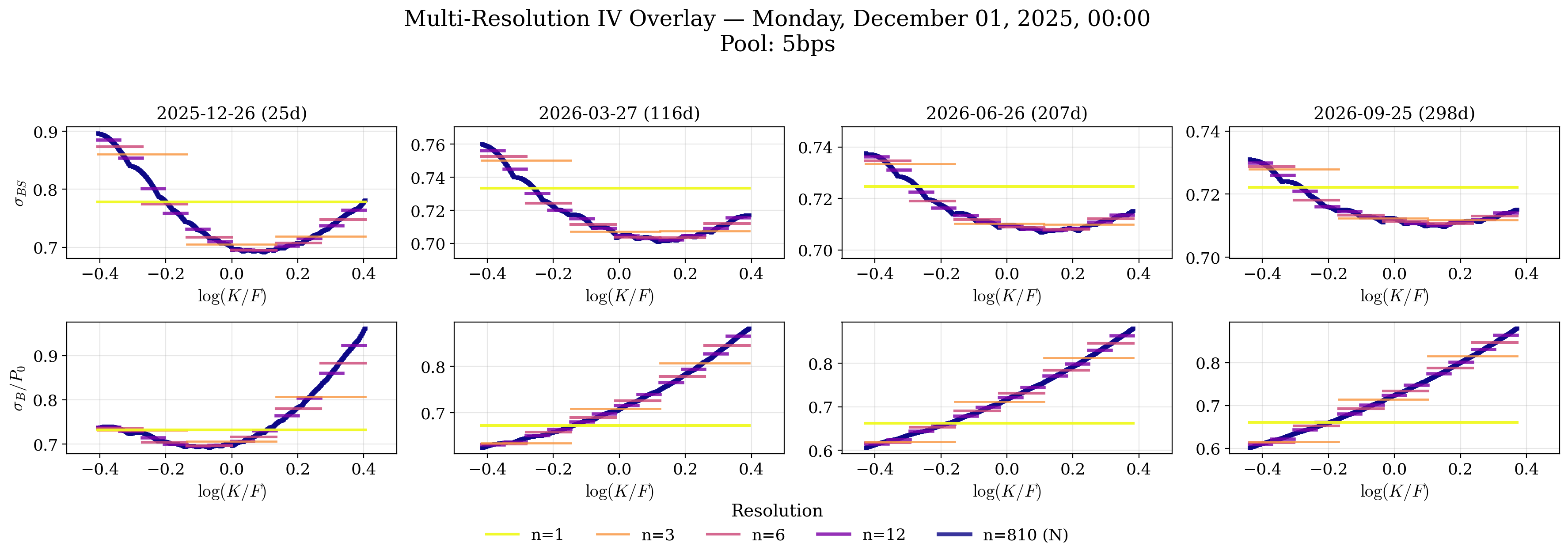}
    \captionof{figure}{Multi-resolution implied-volatility overlay for the 5bp pool (Dec 1, 2025 snapshot).}
\end{center}

\bibliographystyle{alpha}
\bibliography{Ref}

\end{document}